\newtheorem{thm}{Theorem}
\newtheorem{lem}[thm]{Lemma}
\theoremstyle{definition}
\newtheorem{defn}[thm]{Definition}
\newtheorem{rmk}[thm]{Remark}
\numberwithin{equation}{section}
\numberwithin{thm}{section}
\newcommand{\C}{\mathcal{C}}
\newcommand{\D}{\mathcal{D}}
\newcommand{\Hc}{\mathcal{H}}
\newcommand{\R}{\mathbb{R}}
\newcommand{\Rc}{\mathcal{R}}
\renewcommand{\S}{\mathcal{S}}
\newcommand{\V}{\mathcal{V}}
\newcommand{\Z}{\mathbb{Z}}
\newcommand{\p}{\partial}
\renewcommand{\epsilon}{\varepsilon}
\newcommand{\dx}{\: \mathrm{d}}
\newcommand{\uf}{\mathfrak{u}}
\newcommand{\Bf}{\mathfrak{B}}
\newcommand{\Cf}{\mathfrak{C}}
\renewcommand{\v}{\mathbf{v}}
\newcommand{\frm}{\mathrm{f}}
\newcommand{\trm}{\mathrm{t}}
\newcommand{\nm}{\noalign{\smallskip}}
\newcommand{\ds}{\displaystyle}
\newcommand{\iu}{\mathrm{i}\mkern1mu}
\DeclareMathOperator*{\argmax}{argmax}
\newcommand{\ddp}[2]{\frac{\partial#1}{\partial#2}}
\newcommand{\neutralize}[1]{\expandafter\let\csname c@#1\endcsname\count@}
\title{Convergence rates for defect modes in large finite resonator arrays}
\author{
	Habib Ammari\thanks{\footnotesize Department of Mathematics,
		ETH Zurich, Zurich, Switzerland (habib.ammari@math.ethz.ch).}\and Bryn Davies\thanks{\footnotesize Department of Mathematics, Imperial College London, London, UK (bryn.davies@imperial.ac.uk).} \and Erik Orvehed Hiltunen\thanks{\footnotesize Department of Mathematics, Yale University, New Haven, USA (erik.hiltunen@yale.edu).}}
\date{}
\begin{document}

	\maketitle
	
	\begin{abstract}
	We show that defect modes in infinite systems of resonators have corresponding modes in finite systems which converge as the size of the system increases. We study the generalized capacitance matrix as a model for three-dimensional coupled resonators with long-range interactions and consider defect modes that are induced by compact perturbations. If such a mode exists, then there are elements of the discrete spectrum of the corresponding truncated finite system that converge to each element of the pure point spectrum. The rate of convergence depends on the dimension of the lattice. When the dimension of the lattice is equal to that of the physical space, the convergence is exponential. Conversely, when the dimension of the lattice is less than that of the physical space, the convergence is only algebraic, because of long-range interactions arising due to coupling with the far field.
	\end{abstract}
	\vspace{0.5cm}
	\noindent{\textbf{Mathematics Subject Classification (MSC2010):} 35J05, 35C20, 35P20.
		
	\vspace{0.2cm}
	
	\noindent{\textbf{Keywords:}} finite crystals, metamaterials, edge effects, capacitance coefficients, subwavelength resonance, long-range interactions, spectral convergence rates
	\vspace{0.5cm}

\section{Introduction}
Much of the physical literature concerning wave propagation in periodic media relies on a believable but highly non-trivial piece of logic. That is, researchers want to be able to relate the spectral properties of infinite periodic structures with truncated, finite versions of the same material. The motivation for this is that infinite periodic structures can be described very concisely using Floquet-Bloch analysis \cite{kuchment1993floquet}. However, finite, truncated versions of the structure are often required when it comes to either numerical or physical experiments. It is perfectly plausible that the two structures should behave similarly, particularly away from the edges of the truncated structure and especially when the truncated structure is very large. However, a precise convergence theory relating the spectra of these two quite different differential operators is, in general, yet to be developed.

The many interesting phenomena that occur at the edges of periodic arrays have been studied in some detail \cite{hills1965semiinfinite}. For example, there is a tendency for wave energy to be localized to the edges of the structure, taking the form of surface waves \cite{vinogradova2019full, linton2007scattering}. This is an example of an \emph{edge effect} and highlights that there will always be fundamental differences between how infinite and truncated structures interact with waves. Another important question that has been explored in this field, and is intimately related to the results presented in this work, is the extent to which waves incident on the edge of a truncated periodic structure can excite Bloch waves in the structure (thus, replicating the behaviour of its infinite counterpart) \cite{tymis2014scattering, joseph2015reflection, thompson2018directI}.

The central question of this work is the extent to which the resonant spectra of infinite and truncated structures can be related. We will focus on localized modes which decay quickly outside of some compact region, meaning they are less severely affected by edge effects. Additionally, localized modes are the eigenmodes of interest for many wave guiding applications. Existing results have shown that in certain one- and two-dimensional systems, any \emph{defect mode} of the infinite structure will have a corresponding mode in the truncated structure converging to the defect mode as the size tends to infinity \cite{lin2013resonances, lin2016perturbation, lu2022defect, lin2015twodim}. A defect mode is a mode that is created by making a perturbation to introduce a defect to the periodic structure. Such a mode is characterized by being spatially localized (in the sense that it decays quickly enough to be square integrable along the axis or axes of periodicity) and having an eigenfrequency that belongs to the pure point spectrum of the perturbed periodic operator. The terminology ``pure point spectrum'' and ``defect mode eigenfrequency'' are preferred by spectral analysts and wave physicists, respectively, and we will use them somewhat interchangeably here. The rest of the spectrum will typically be composed of the \emph{continuous spectrum}, which corresponds to the Bloch modes that propagate through the material without decaying. 

In previous works, it was shown that the convergence of defect mode eigenfrequency to the pure point spectrum of the infinite periodic operator was exponential with respect to the size of the truncated array \cite{lin2013resonances, lin2016perturbation, lu2022defect, lin2015twodim}. These results concerned either one-dimensional systems or two-dimensional systems with two-dimensional lattices. In this work, we study the \emph{generalized capacitance matrix}, which is a dense resonator model that includes long-range interactions \cite{ammari2021functional}. This model gives a leading-order characterisation of the resonant modes of a three-dimensional scattering problem with high-contrast resonators. However, it can be viewed more generally as a canonical model for coupled resonators. In this setting, we will prove that any defect mode eigenfrequency of the infinite structure has a sequence of eigenvalues of the truncated structures converging to it. This expands on the one- and two-dimensional models explored previously, by showing that this convergence is algebraic when the dimension of the periodic lattice is less than the dimension of the differential operator (which is three, in this work). This is due to the long-range interactions that arise since waves are able to radiate in the ``spare'' dimensions and couple with the far field. When the lattice is three-dimensional, with no spare dimensions, we see the same exponential convergence as for one-dimensional lattices in one-dimensional differential problems \cite{lin2013resonances, lin2016perturbation, lu2022defect} and two-dimensional lattices in two-dimensional differential problems \cite{lin2015twodim}. We believe that similar behaviour will be observed in other multi-dimensional differential systems and dense matrix models. We emphasise that the results of this work hold in the Hermitian case of real, positive, material parameters.

This paper is split into three main parts. In \Cref{sec:prelims}, we introduce the matrix model (the generalized capacitance matrix) that we will study and prove some elementary properties that lay the foundations for the subsequent analysis. \Cref{sec:purep} contains the main results of this work, which show that the truncated structures have eigenfrequencies that converge to the pure point spectrum of the infinite structure (and characterise the rate of this convergence). Finally, in \Cref{sec:cont}, we briefly present numerical evidence for the convergence of the truncated spectra to the continuous spectrum. This will be handled in more detail in \cite{ammari2023spectral}, where we take advantage of the constructive nature of the generalized capacitance matrix to prove the convergence of the Bloch modes.

\section{The generalized capacitance matrix model} \label{sec:prelims}

In this section, we will introduce the generalized capacitance matrix model that will be the object of this study. Its definition uses layer potentials to capture the (potentially complex) shapes of the resonators. In \Cref{sec:asymptotics} give a detailed presentation of asymptotic results showing how this model can be deduced from a subwavelength resonance problem with a system of high-contrast resonators. Finally, we will prove a convergence result for the capacitance coefficients that will be the basis of the theorems in subsequent sections.

\subsection{Definition}

We study a system of periodically repeated resonators in a lattice in $\R^3$.  We take lattice vectors $l_1, \dots, l_{d}\in \R^3$, where $0<d\leq3$, and let $\Lambda$ denote the lattice generated by these vectors. In other words,
$$\Lambda := \left\{ m_1 l_1+\dots+m_{d} l_{d} ~|~ m_i \in \Z \right\}. $$
At this point, we remark that there are three possible cases: $d=1$, corresponding to a \emph{chain} of resonators; $d=2$, corresponding to a \emph{screen} of resonators; or $d=3$, corresponding to a \emph{crystal} of resonators. In the case $d<3$, the resonator structure is bounded in the direction(s) perpendicular to the lattice, and the waves are radiating outwards. For simplicity, we assume that the lattice is aligned with the first $d$ coordinate axes.

We take $Y\subset \R^3$ to be a single unit cell,
$$Y = \begin{cases}
	\{c_1 l_1 + x_2 e_2 + x_3e_3 \mid 0\leq c_1\leq 1, x_{2},x_3 \in \R \}, & d=1, 
	\\
	\{c_1 l_1 + c_2 l_2 + x_3e_3 \mid 0\leq c_1,c_2\leq 1, x_3 \in \R \}, & d=2,
	\\
	\{c_1 l_1 + c_2 l_2 + c_3l_3 \mid 0\leq c_1,c_2,c_3\leq 1 \}, &d=3.
\end{cases}$$
The resonators studied in this work are bounded inclusions of a contrasting material inside some background, as depicted in \Cref{fig:notation}. We let $D\subset Y$ be a collection of $N$ resonators contained in $Y$
$$D = \bigcup_{i=1}^N D_i,$$
where $D_n$ are disjoint, bounded domains in $Y$ with boundary $\p D_i \in C^{1,s}$ for $s>0$. In the periodic lattice, we let $D_i^m = D_i + m,$ for $m\in \Lambda$, and then denote the full lattice as 
$$\D=\bigcup_{m\in\Lambda} \bigcup_{i=1}^N D_i^m.$$

We will define a finite system of resonators resulting from truncation of the periodic lattice. Let $I_r \subset \Lambda $ be all lattice points within distance $r$ from the origin
$$I_r = \{m \in \Lambda \mid |m| < r \}.$$
We define the finite collection of resonators $\D_\frm = \D_\frm(r)$ as 
\begin{equation} \label{eq:finitelattice}
\D_\frm(r) = \bigcup_{m\in I_r} D + m.
\end{equation}
In this setting, $\D_\frm$ is a finite lattice where $D$ is the single, repeated unit. The goal is to clarify in which sense the spectral properties of a finite, but large, lattice can be approximated by the corresponding infinite one.

\begin{figure}
\centering
\begin{tikzpicture}[scale=2]
		
	\node[inner sep=0pt] at (-0.2,0)  {\includegraphics[height=5.9cm]{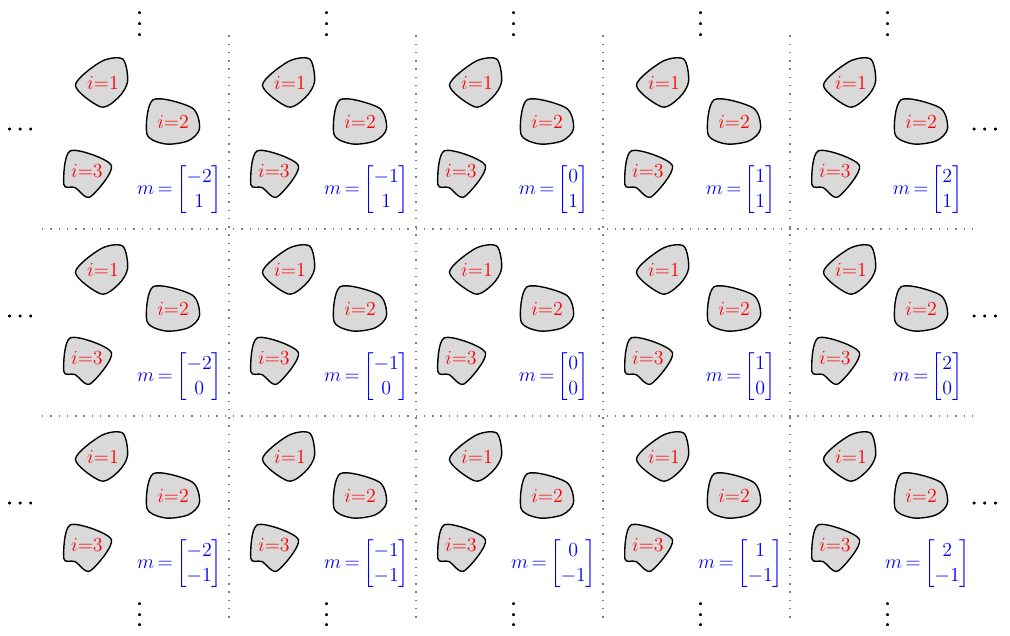}};
		
	\begin{scope}[xshift=-0.9cm,yshift=-2.9cm]
	\node[inner sep=0pt] at (-0.2,0)  {\includegraphics[height=5.4cm]{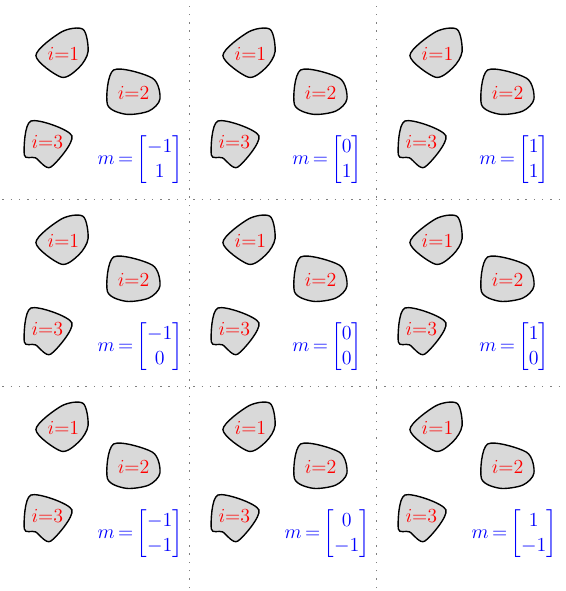}};
	\draw[<->] (-1.4,-1.3) -- (1,-1.3) node[midway, below]{$r$};
	\draw[<->] (-1.6,-1.15) -- (-1.6,1.15) node[midway, left]{$r$};
	\draw [decorate,opacity=0.5,rotate=180,decoration={brace,amplitude=10pt},xshift=-4pt,yshift=0pt](-1.15,-1.2) -- (-1.15,1.3);
	\draw[->,opacity=0.5] (1.45,-0.05) -- (2,-0.05) node[midway,above]{\scriptsize \eqref{eq:Cfinite}};
	\node at (1,1.2) {\large $\D_\frm(r)$};
	\end{scope}

	\draw [decorate,opacity=0.5,rotate=180,decoration={brace,amplitude=10pt},xshift=-4pt,yshift=0pt](-2.05,-1.3) -- (-2.05,1.3);
	\draw[->,opacity=0.5] (2.35,0) -- (3.1,0) node[midway,above]{\scriptsize \eqref{eq:Calpha}};
	\node at (2,1.2) {\large $\D$};
	
	\node at (4,0.15) {\large $\widehat{C}^\alpha, \alpha \in Y^*$};
	\node at (4,-0.1) {\scriptsize quasi-periodic capacitance};
	\node at (4,-0.25) {\scriptsize matrix};
	\draw[opacity=0.5] (3.1,-0.4) rectangle (4.9,0.4);
	
	\draw[->,opacity=0.5] (4,-0.4) -- (4,-1) node[midway,right, align=left]{\scriptsize Inverse Floquet \\ \scriptsize transform \eqref{eq:Crealspace}};
	
	\node at (4,-1.25) {\large $\Cf$};
	\node at (4,-1.5) {\scriptsize real-space capacitance};
	\node at (4,-1.65) {\scriptsize matrix};
	\draw[opacity=0.5] (3.1,-1.8) rectangle (4.9,-1);
	
	\draw[->,opacity=0.5] (4,-1.8) -- (4,-2.4) node[midway,right, align=left]{\scriptsize Truncate};
	\draw[opacity=0.5] (3.1,-3.2) rectangle (4.9,-2.4);
	
	\node at (4,-2.65) {\large $C_\trm(r), r\in(0,\infty)$};
	\node at (4,-2.9) {\scriptsize truncated capacitance};
	\node at (4,-3.05) {\scriptsize matrix};
	
	\begin{scope}[yshift=-0.15cm]
	\draw[opacity=0.5] (1.1,-3.2) rectangle (2.9,-2.4);
	\node at (2,-2.65) {\large $C_\frm(r), r\in(0,\infty)$};
	\node at (2,-2.9) {\scriptsize finite capacitance};
	\node at (2,-3.05) {\scriptsize matrix};
	\end{scope}
	\end{tikzpicture}
	\caption{This work studies the convergence of the eigenfrequencies of defect modes in a truncated periodic material to the spectrum of the corresponding infinite material. We use capacitance matrices as a canonical model for many-body scattering of time-harmonic waves. The aim of this work is to show how eigenvalues of the finite capacitance matrix $C_\frm(r)$ converge to those of the real-space capacitance matrix $\Cf$. The calligraphic font for $\Cf$ denotes the fact that this is an infinite matrix. Our strategy is to compare the spectrum of $C_\frm(r)$ with the truncated capacitance matrix $C_\trm(r)$, which is obtained by truncating all but a finite $O(r)$ number of rows in $\Cf$, before letting $r\to\infty$. Throughout this work, we use the block matrix notation $(C^{\color{blue} mn})_{\color{red} ij}$ to refer to the $\color{red} i,j\in\{1,\dots,N\}$ entry of the $\color{blue} n,m\in\Lambda$ block in a matrix $C$.} \label{fig:notation}
\end{figure}

Next, we define the capacitance matrix model for the resonator systems described above. This can be viewed as a discrete canonical model for resonator systems with long-range coupling. For reference, in \Cref{sec:asymptotics} we present details on the differential (Helmholtz) problem on $\D$ which is well-approximated by the capacitance model in the subwavelength regime (\Cref{thm:asymp}). We let $G$ be the Green's function for Laplace's equation in three dimensions:
$$G(x) = -\frac{1}{4\pi|x|}.$$
Given a bounded domain $\Omega \subset \R^3$, we then define the \emph{single layer potential}  $\mathcal{S}_\Omega: L^2(\p \Omega) \to H^1(\p \Omega)$ as
$$\mathcal{S}_\Omega[\varphi](x) := \int_{\partial \Omega} G(x-y) \varphi(y) \dx\sigma(y),\quad x\in \p \Omega.$$
Here, $L^2(\partial\Omega)$ is the space of all square-integrable functions on the boundary $\partial\Omega$ and $H^1(\partial\Omega)$ is the subset of elements of $L^2(\partial\Omega)$ which have weak first derivatives that are also square integrable. An important property of $\S_\Omega$ is that it is known to be invertible \cite{ammari2009layer}. 
\begin{defn}[Capacitance coefficients for a finite lattice]
For a finite lattice, as defined in \eqref{eq:finitelattice}, we define the capacitance coefficients as 
\begin{equation} \label{eq:Cfinite}
(C^{mn}_\frm)_{ij}(r) = \int_{\p D_i^m} \S_{\D_\frm}^{-1}[\chi_{\p D_j^n}]  \dx \sigma,
\end{equation}
for $1\leq i,j\leq N$ and $m,n \in I_r$. Here, $\chi_A$ is used to denote the characteristic function of a set $A$. We have explicitly indicated the dependence on the size $r$ of the truncated lattice. For $m,n \in I_r$, we observe that $C^{mn}_\frm(r)$ is a matrix of size $N\times N$, while the block matrix $C_\frm = (C^{mn}_\frm)$ is a matrix of size  $N|I_r|\times N|I_r|$.
\end{defn}

We next define the capacitance coefficients for the infinite lattice. We begin by defining the dual lattice $\Lambda^*$ of $\Lambda$ as the lattice generated by $\alpha_1,...,\alpha_{d}$ satisfying $ \alpha_i\cdot l_j = 2\pi \delta_{ij}$ for $i,j = 1,...,d,$  and whose projection onto the orthogonal complement of $\Lambda$ vanishes. We define the \emph{Brillouin zone} $Y^*$ as $Y^*:= \big(\R^{d}\times\{\mathbf{0}\}\big) / \Lambda^*$, where $\mathbf{0}$ is the zero-vector in $\R^{3-d}$. We remark that $Y^*$ can be written as $Y^*=Y^*_d\times\{\mathbf{0}\}$, where  $Y^*_d$ has the topology of a torus in $d$ dimensions.

When $\alpha\in  Y\setminus \{0\}$, we can define the quasi-periodic Green's function $G^{\alpha}(x)$ as
\begin{equation}\label{eq:xrep}
	G^{\alpha}(x) := \sum_{m \in \Lambda} G(x-m)e^{\iu \alpha \cdot m}.
\end{equation}
The series in \eqref{eq:xrep} converges uniformly for $x$ in compact sets of $\R^d$, with $x\neq 0$ and $\alpha \neq 0$. Given a bounded domain $\Omega \subset Y$, we can then define the \emph{quasi-periodic} single layer potential  $\mathcal{S}_\Omega^{\alpha}: L^2(\p \Omega) \to H^1(\p \Omega)$ as
\begin{equation}
\mathcal{S}_\Omega^{\alpha}[\varphi](x) := \int_{\partial \Omega} G^{\alpha} (x-y) \varphi(y) \dx\sigma(y),\quad x\in \p \Omega.
\end{equation}
It is well-known that  $\mathcal{S}_\Omega^{\alpha}: L^2(\p \Omega) \to H^1(\p \Omega)$ is an invertible operator \cite{ammari2009layer}.
\begin{defn}[Capacitance coefficients for an infinite lattice]
For $\alpha \in Y^*$ and for $1\leq i,j\leq N$, the quasi-periodic capacitance matrix (``dual-space'' representation) is the $N\times N$-matrix defined as 
\begin{equation} \label{eq:Calpha}
\widehat{C}_{ij}^\alpha = \int_{\p D_i} (\S_D^\alpha)^{-1}[\chi_{\p D_j}] \dx \sigma.
\end{equation}
For $1\leq i,j\leq N$, we can then define the ``real-space'' capacitance coefficients at the lattice point $m$ by
\begin{equation} \label{eq:Crealspace}
C_{ij}^m = \frac{1}{|Y^*|}\int_{Y^*} \widehat{C}_{ij}^\alpha e^{-\iu \alpha\cdot m}\dx \alpha.
\end{equation}
Here, $C_{ij}^0$ corresponds to the diagonal block which contains the capacitance coefficients of the resonators within a single unit cell. We will use the notation $\Cf$ to denote the infinite matrix that contains all the $C_{ij}^m$ coefficients, for all $1\leq i,j\leq N$ and all $m\in\Lambda$.
\end{defn}

A final, important quantity for the analysis in this work is the truncated capacitance matrix $C_\trm$. This is obtained by keeping only $N|I_r|\times N|I_r|$ coefficients from $\Cf$, to give a matrix that is the same size as $C_\frm$. A schematic of the various pieces of notation used in this article and how they related to each other is given in \Cref{fig:notation}. The proof strategy deployed in this work is to compare the spectra of $C_\frm$ with that of $C_\trm$, and then let $r\to\infty$ in order to approximate the spectrum of $\Cf$. In particular, the modes that we will compare are \emph{defect modes}, which are spatially localized modes that exist due to the presence of defects in the otherwise periodic material, an example of which is shown in \Cref{fig:modeplot}.

\begin{figure}
\centering
\includegraphics[width=0.6\linewidth]{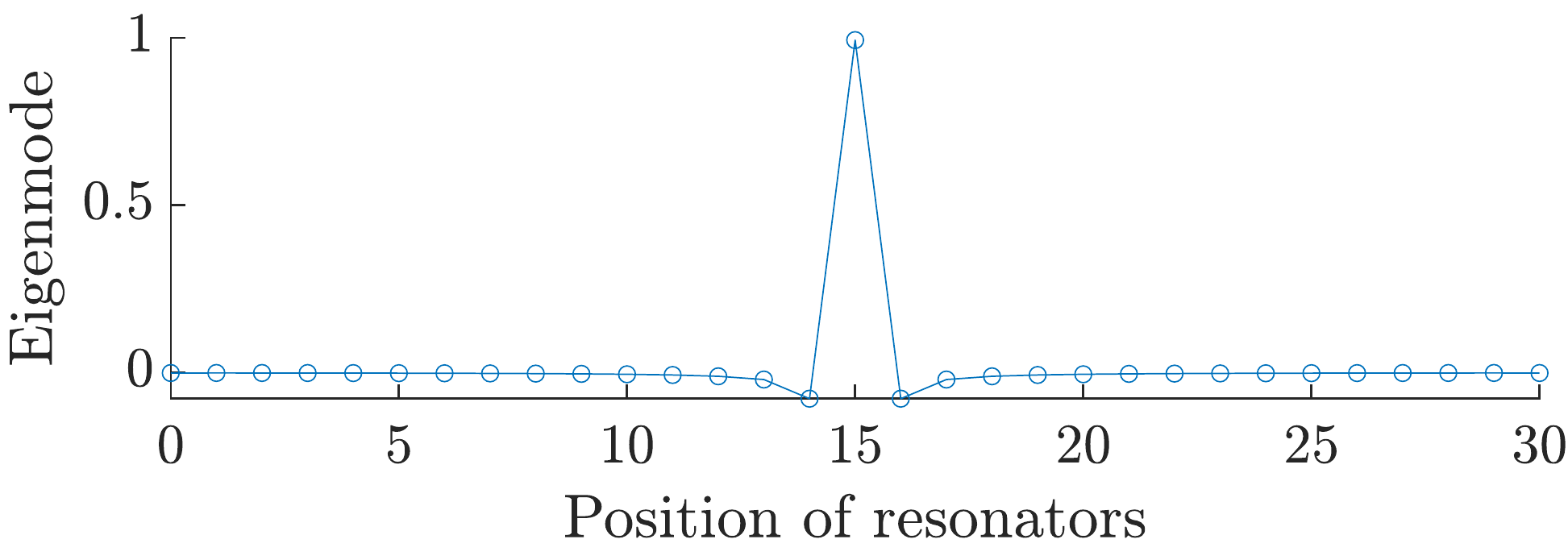}
\caption{An example of a localized defect mode for a system of 31 resonators. The eigenvalues of the finite matrix $B_\trm C_\frm$ are computed, where $C_\frm$ is the generalized capacitance matrix for a system of evenly spaces resonators and $B_\trm$ is the identity matrix but with the central entry $(B_\trm)_{11}^0=2$.} \label{fig:modeplot}
\end{figure}

We will model defect modes through pre-multiplication by a defect matrix $\Bf$. For each $m\in \Lambda$, we let $B^m$ be an $N\times N$ diagonal matrix
\begin{equation}
B^m  = \begin{pmatrix} b_1^m & 0 &\cdots & 0\\ 0&b_2^m&\cdots & 0 \\ \vdots & \vdots & \ddots & \vdots \\ 0&0& \cdots & b_N^m\end{pmatrix},
\end{equation}
where the diagonal entries $b_i^m$ are real-valued parameters. The coefficients $b_i^m$ describe perturbations of the material parameters (wave speed and contrast parameter) associated to each resonator $D_i^m$ (for further details we refer to \Cref{sec:asymptotics}). In this work, we only consider \emph{compact} defects, where $b_i^m=1$ for all but finitely many $i$ and $m$. For the infinite structure, we let $\Bf$ be the infinite block-diagonal matrix that contains $B^m$ for all $m\in \Lambda$. Under the assumption on the $b_i^m$, $\Bf$ is said to be a compact perturbation of the identity. The spectrum of the infinite structure is given by the solutions to the spectral problem
\begin{equation}\label{eq:eig}
\Bf\Cf \uf = \lambda \uf.
\end{equation}
For the finite structure of size $r$, we let $B_\trm$ be the block-diagonal matrix $(B^m), m\in I_r$ and consider the spectral problem
$$B_\trm C_\frm u = \lambda u.$$
An example of such a defect mode is shown in \Cref{fig:modeplot}. A system of 31 resonators is modelled, with the finite defect matrix $B_\trm$ chosen to be the identity, perturbed so that its central element is $(B_\trm)_{11}^0=2$.

The generalized capacitance matrix serves not only as a canonical model for coupled resonators (whose interaction terms decay as $r^{-1}$), but can also be derived from first principles in certain physical settings. For example, in \Cref{sec:asymptotics} we briefly explain how this model arises for a system of high-contrast resonators in which case the eigenstates of the generalized capacitance matrix fully characterize the subwavelength resonant spectrum of the system.

\subsection{Convergence of capacitance coefficients}
Based on the layer-potential characterization of capacitance, we prove in this section that the capacitance coefficients of a large but finite structure converge, as the size grows, to corresponding coefficients of the infinite structure. We begin with the following result, which collects some well-known results on the capacitance matrices \cite{ammari2021functional, diaz2011positivity}.
\begin{lem}
	Let $\widehat{C}^\alpha$ and $C_\frm$ be the quasi-periodic and finite capacitance matrix, respectively. Then 
	\begin{itemize}
		\item[(i)] $\widehat{C}^\alpha$ and $C_\frm$ are symmetric, positive definite matrices;
		\item[(ii)] $\widehat{C}^\alpha$ and $C_\frm$ are strictly diagonally dominant matrices;
		\item[(iii)] We have $(\widehat{C}^\alpha)_{ii}> 0$ and $(C^{mm}_\frm)_{ii} > 0$. Moreover, for $i\neq j$ and $m\neq n$ we have $(\widehat{C}^\alpha)_{ij}< 0$ and $(C^{mn}_\frm)_{ij} < 0$.
	\end{itemize}
\end{lem}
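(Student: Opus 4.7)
All three statements follow from standard potential-theoretic arguments using the interpretation of $\psi_j^n := \S_{\D_\frm}^{-1}[\chi_{\p D_j^n}]$ (and its quasi-periodic analogue) as the jump in normal derivative of the harmonic \emph{equilibrium potential} $u_j^n := \S_{\D_\frm}[\psi_j^n]$. This $u_j^n$ is characterised as the unique function that is harmonic off $\p \D_\frm$, equal to $1$ on $\p D_j^n$ and to $0$ on the remaining components of $\p \D_\frm$, and that either decays at infinity (finite case) or satisfies $\alpha$-quasi-periodicity (quasi-periodic case). I would handle both cases in parallel; the only substantive difference is that in the quasi-periodic setting integrations by parts are done over the unit cell $Y$, with boundary terms on $\p Y$ cancelling between opposite faces by quasi-periodicity of $u_j^n$ and $\nabla u_j^n$.

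\textbf{(i)} Symmetry of the entries is immediate from $G(x-y)=G(y-x)$ (resp.\ $\overline{G^\alpha(x-y)}=G^{\alpha}(y-x)$), which makes $\S_{\D_\frm}$ and $\S_D^\alpha$ self-adjoint on the relevant $L^2$ space, hence so are their inverses. For positive-definiteness, given a test vector $v=(v_i^m)$ set $\phi := \sum_{m,i} v_i^m \chi_{\p D_i^m}$, whereupon
\[
v^\top C_\frm\,v \;=\; \langle \phi,\,\S_{\D_\frm}^{-1}\phi\rangle_{L^2(\p \D_\frm)}.
\]
Integration by parts, separately on each bounded component and on the unbounded complement, combined with the standard single-layer jump relation, rewrites the right-hand side as a sign-definite multiple of the Dirichlet energy $\int |\nabla u|^2$ of $u := \S_{\D_\frm}[\S_{\D_\frm}^{-1}\phi]$; this vanishes only when $u\equiv 0$, which forces $\phi=0$ and hence $v=0$.

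\textbf{(iii)} Apply the strong maximum principle to $u_j^n$. Inside $D_j^n$ the harmonic extension of the constant boundary value $1$ is the constant $1$, so $\p_\nu u_j^n|_-=0$; outside $D_j^n$ the function attains its maximum on $\p D_j^n$ and decays to $0$ at infinity, so $\p_\nu u_j^n|_+$ is strictly negative. Symmetrically, on every other component $\p D_i^m$ we have $u_j^n\equiv 0$ inside $D_i^m$ while $u_j^n\geq 0$ outside with $u_j^n=0$ on the boundary, giving $\p_\nu u_j^n|_+$ strictly positive. The jump formula $\psi_j^n = \p_\nu u_j^n|_+ - \p_\nu u_j^n|_-$ then yields the claimed signs of the entries after integration over $\p D_i^m$; the argument for $\widehat{C}^\alpha$ is identical with the equilibrium potential supported on the unit cell.

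\textbf{(ii)} Summing the entries of a row gives
\[
\sum_{n,j}(C^{mn}_\frm)_{ij} \;=\; \int_{\p D_i^m} \S_{\D_\frm}^{-1}[\chi_{\p \D_\frm}]\,\dx\sigma,
\]
and the associated total equilibrium potential $U := \S_{\D_\frm}[\S_{\D_\frm}^{-1}[\chi_{\p \D_\frm}]]$ satisfies $U\equiv 1$ on $\overline{\D_\frm}$ and $0<U<1$ in its exterior, so that $\p_\nu U|_+$ is strictly negative on each $\p D_i^m$. Combined with (iii) this forces a row sum of strictly definite sign, which is precisely strict diagonal dominance. The main technical subtlety throughout is sign bookkeeping: with the convention $G(x)=-\tfrac{1}{4\pi|x|}$ one must carefully track the signs in the single-layer jump relation and in the Green-type identity so that the Dirichlet-energy computation and the maximum-principle arguments produce the signs stipulated by the lemma. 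A secondary point in the quasi-periodic case is that the decay-at-infinity step is replaced by cancellation of boundary terms on $\p Y$ via $\alpha$-quasi-periodicity, which is routine but should be explicitly verified.
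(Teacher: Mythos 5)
The paper does not actually prove this lemma; it is stated as a collection of known facts with citations to the literature. Your potential-theoretic argument (equilibrium potentials, the jump relation $\psi = \partial_\nu u|_+ - \partial_\nu u|_-$, Hopf's lemma for the signs, the Dirichlet-energy identity for definiteness, and the total equilibrium potential for the row sums) is exactly the standard route taken in those references, and the logical structure is sound: in particular, deducing strict diagonal dominance in (ii) from the strict sign of the row sum combined with the sign pattern of (iii) is the right way to organise the proof, and your remark that the $\partial Y$ boundary terms cancel pairwise under quasi-periodicity is the correct replacement for decay at infinity.

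However, there is one point you defer to ``sign bookkeeping'' that, as written, leaves your conclusion inconsistent with your own computation. With the paper's literal definition \eqref{eq:Cfinite} and $G(x)=-\tfrac{1}{4\pi|x|}$, your maximum-principle step gives $\partial_\nu u_j^n|_+ < 0$ on $\partial D_j^n$ and hence $(C^{nn}_\frm)_{jj}=\int_{\partial D_j^n}\partial_\nu u_j^n|_+\,\mathrm{d}\sigma < 0$ (check against the unit ball: $\mathcal{S}_B[1]=-1$ on $\partial B$, so $\mathcal{S}_B^{-1}[\chi_{\partial B}]=-1$ and the integral is $-4\pi$), and likewise the Dirichlet-energy identity yields $v^\top C_\frm v = -\int_{\R^3}|\nabla u|^2\,\mathrm{d}x \le 0$. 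These are the opposite signs to those asserted in (i) and (iii). The resolution is that the conventional definition in the cited reference carries an overall minus sign, $C_{ij}=-\int_{\partial D_i}\mathcal{S}_D^{-1}[\chi_{\partial D_j}]\,\mathrm{d}\sigma$, which \eqref{eq:Cfinite} and \eqref{eq:Calpha} omit; your proof establishes the lemma for that sign-corrected definition. You should say this explicitly rather than asserting that the jump formula ``yields the claimed signs'': as stated, it yields their negatives. A second, smaller caveat: for $\alpha\neq 0$ the kernel satisfies $\overline{G^\alpha(x-y)}=G^\alpha(y-x)$, so $\widehat{C}^\alpha$ is Hermitian rather than real symmetric, and the definiteness argument must be run with the sesquilinear form $\langle\phi,(\mathcal{S}_D^\alpha)^{-1}\phi\rangle$; the pointwise sign statements of (iii) for the off-diagonal entries of $\widehat{C}^\alpha$ do not follow from a real-valued maximum principle alone and need separate justification.
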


The next result shows that a fixed block of the infinite capacitance matrix is approximately equal to corresponding block of the capacitance matrix of the finite structure. In other words, the finite-structure capacitance coefficients can be approximated through the infinite structure as long as we are sufficiently far away from the edges of the finite structure.
\begin{thm} \label{thm:approx}
	For fixed $m,n \in \Lambda$, we have as $r\to \infty$, 
	$$\lim_{r\to \infty} C^{mn}_\frm(r) = C^{m-n}.$$
\end{thm}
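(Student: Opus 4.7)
The plan is to reformulate both capacitance coefficients as boundary integrals of single-layer-potential densities, to construct a candidate infinite-lattice density by an inverse Floquet transform of the quasi-periodic densities, and then to show that on any fixed block this candidate is asymptotically close to the true finite-lattice density.

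First, inverting $\S_{\D_\frm(r)}$ in \eqref{eq:Cfinite} gives
\[
(C^{mn}_\frm)_{ij}(r) = \int_{\p D_i^m}\psi_j^{n,r}\dx\sigma, \qquad \S_{\D_\frm(r)}[\psi_j^{n,r}] = \chi_{\p D_j^n}\ \text{on}\ \p\D_\frm(r).
\]
Using $\hat\psi_j^\alpha := (\S_D^\alpha)^{-1}[\chi_{\p D_j}]$, define a density $\psi_j^n$ on $\p\D$ by, for $x = y+k$ with $y\in\p D_i$ and $k\in\Lambda$,
\[
\psi_j^n(x) := \frac{1}{|Y^*|}\int_{Y^*} \hat\psi_j^\alpha(y)\, e^{-\iu\alpha\cdot(k-n)}\dx\alpha.
\]
Interchanging the Floquet integral with the boundary integration, and combining with the definition $\widehat C_{ij}^\alpha = \int_{\p D_i}\hat\psi_j^\alpha\dx\sigma$ and \eqref{eq:Crealspace}, immediately yields $\int_{\p D_i^m}\psi_j^n\dx\sigma = C_{ij}^{m-n}$. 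Hence the theorem reduces to the claim $\int_{\p D_i^m}(\psi_j^{n,r} - \psi_j^n)\dx\sigma \to 0$ for fixed $i,j,m,n$.

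To produce this comparison, let $\tilde\psi_j^{n,r}$ denote the restriction of $\psi_j^n$ to $\p\D_\frm(r)$. The Bloch identity $\sum_{m\in\Lambda}G(x-m)e^{\iu\alpha\cdot m} = G^\alpha(x)$ for $\alpha\neq 0$, combined with the $\alpha$-integration that defines $\psi_j^n$, shows that $\psi_j^n$ itself satisfies $\S_\D[\psi_j^n] = \chi_{\p D_j^n}$ on $\p\D$ (where $\S_\D$ is understood via this quasi-periodic decomposition). Splitting the integral defining $\S_\D[\psi_j^n]$ into its contribution from $\p\D_\frm(r)$ and its complement gives, on $\p\D_\frm(r)$,
\[
\S_{\D_\frm(r)}[\tilde\psi_j^{n,r}](x) = \chi_{\p D_j^n}(x) - e_r(x), \qquad e_r(x) := \int_{\p\D\setminus\p\D_\frm(r)} G(x-y)\,\psi_j^n(y)\dx\sigma(y),
\]
so that $\psi_j^{n,r} - \tilde\psi_j^{n,r} = \S_{\D_\frm(r)}^{-1}[e_r]$. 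Smoothness of $\alpha\mapsto\hat\psi_j^\alpha$ implies decay of $\psi_j^n$ in the lattice index, which together with the $|x|^{-1}$ decay of $G$ forces $\|e_r\|_{L^2(\p D_i^m)}\to 0$ for any fixed $m$. A uniform-in-$r$ local bound on $\S_{\D_\frm(r)}^{-1}$ then delivers $\|\psi_j^{n,r}-\tilde\psi_j^{n,r}\|_{L^2(\p D_i^m)}\to 0$, and integration over $\p D_i^m$ concludes the proof.

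The main obstacles are (i) establishing the required uniform-in-$r$ local invertibility of $\S_{\D_\frm(r)}$ near a fixed block, which can be attacked by a Neumann-series comparison with the quasi-periodic inverse $(\S_D^\alpha)^{-1}$ once the truncation error is small, and (ii) controlling $e_r$ when $d<3$, where the decay of $\psi_j^n$ is only algebraic, so the sum of $|G|$ over $\Lambda\setminus I_r$ is only marginally convergent. A dominated-convergence argument is enough for the bare pointwise limit claimed here, but this slow decay is precisely the source of the algebraic-versus-exponential dichotomy emphasised in the introduction, and will reappear when the convergence is made quantitative in the sequel.
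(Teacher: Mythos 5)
Your overall strategy is the real-space mirror of the paper's argument: the paper also exploits the decomposition $\S_D^\alpha[\phi]=\sum_{m\in\Lambda}\S_{D+m}[\phi_m^\alpha]$ and a Neumann-series comparison with $\S_{\D_\frm}$, but it performs the entire comparison at fixed quasi-periodicity $\alpha$, working with operators acting on $L^2(\p D)$ (a single cell's worth of boundary, extended quasi-periodically), and only carries out the Brillouin-zone integration at the very end. Because the data there is the quasi-periodic function $\chi_i^\alpha$, the truncation error $\Rc^\alpha$ is genuinely $o(1)$ in operator norm, uniformly in $\alpha$ by \Cref{lem:bound}, and no real-space density for the infinite lattice ever needs to be constructed.

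Your reorganisation introduces a gap at the step $\psi_j^{n,r}-\tilde\psi_j^{n,r}=\S_{\D_\frm(r)}^{-1}[e_r]$. The error $e_r(x)=\int_{\p\D\setminus\p\D_\frm(r)}G(x-y)\psi_j^n(y)\dx\sigma(y)$ is indeed small on the fixed central block $\p D_i^m$, but it is $O(1)$ on the cells near the edge of $\D_\frm(r)$, where the omitted resonators are at distance $O(1)$; consequently $\|e_r\|_{L^2(\p\D_\frm(r))}$ does \emph{not} tend to zero. A uniform-in-$r$ operator-norm bound on $\S_{\D_\frm(r)}^{-1}$ therefore cannot deliver $\|\S_{\D_\frm(r)}^{-1}[e_r]\|_{L^2(\p D_i^m)}\to 0$: what you actually need is off-diagonal decay of the kernel of $\S_{\D_\frm(r)}^{-1}$, uniformly in $r$, so that the $O(1)$ edge contribution is not transported into the bulk. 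Given that the kernel $G$ decays only as $|x|^{-1}$, this localisation property is exactly the delicate long-range-interaction issue the paper is about, and it is not something a Neumann series gives you for free. A secondary concern is your appeal to ``smoothness of $\alpha\mapsto\hat\psi_j^\alpha$'': for $d<3$ the $|\alpha|^{-2}$ singularity of $G^\alpha$ at $\alpha=0$ means $(\S_D^\alpha)^{-1}[\chi_{\p D_j}]$ is bounded (\Cref{lem:bound}) but not smooth across $\alpha=0$, so the decay of $\psi_j^n$ in the lattice index is only algebraic and must be quantified before you can claim absolute convergence of the tail sum defining $e_r$, particularly for $d=2$. Both points would need to be resolved for your route to close; the paper's dual-space formulation is designed precisely to avoid them.
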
	
\begin{proof}
Firstly, observe that 
$$\S_{\D_\frm}[\psi] = \sum_{m\in I_r} \S_{D+m}[\psi_m],$$
where $\psi_m = \psi|_{\partial D+m}$.
Recall that the quasi-periodic single-layer potential is defined as 
$$\S_D^\alpha[\phi] = \int_{\p D} \sum_{m\in \Lambda} G(x-y-m)e^{\iu \alpha\cdot m} \phi(y) \dx \sigma.$$
Given $\phi\in L^2(D)$, we define $\phi_m^\alpha\in L^2(D+m)$ as
$$\phi_m^\alpha(y) = \phi(y-m)e^{\iu \alpha\cdot m}.$$
Then it is clear that 
$$\S_D^\alpha[\phi] = \sum_{m\in \Lambda} \S_{D+m}[\phi_m^\alpha].$$
We can then decompose 
\begin{align*}
	\S_D^\alpha[\phi] &= \sum_{m\in I_r} \S_{D+m}[\phi_m^\alpha] +  \int_{\p D} \sum_{m\in \Lambda\setminus I_r} G(x-y-m)e^{\iu \alpha\cdot m} \phi(y) \dx \sigma \\
	&= \S_{\D_\frm}[\phi^\alpha] + \Rc^\alpha[\phi],
\end{align*}
where, in the operator norm, $\Rc^\alpha = o(1)$ as $r\to \infty$. From the Neumann series, we now have
\begin{equation}\label{eq:main}
	(\S_D^\alpha)^{-1}[\chi_{\p D_i}] = \S_{\D_\frm}^{-1}[\chi_i^\alpha] + o(1),
\end{equation}
where  $\chi_i^\alpha $ is defined as
$$\chi_i^\alpha = \sum_{m\in I_r} \chi_{\p D_i^m}e^{\iu \alpha \cdot m}.$$	
From \Cref{lem:bound} in \Cref{sec:alp}, we know that the error term in \eqref{eq:main} holds uniformly in $\alpha$. If $m,n\in I_r$ are fixed and $i,j=1,...,N$, we then have from \eqref{eq:main} that
\begin{align*}
	C_{ij}^{m-n} &= \frac{1}{|Y^*|}\int_{Y^*}\int_{\p D_i}e^{-\iu\alpha\cdot (m-n)} \S_{\D_\frm}^{-1}[\chi_j^\alpha]\dx \sigma  \dx \alpha + o(1) \\
	&= \int_{\p D_i^m} \S_{\D_\frm}^{-1}[\chi_{\p D_j^n}]  \dx \sigma + o(1) \\
	&= (C^{mn}_\frm)_{ij}(r) + o(1).
\end{align*}
This proves the claim.
\end{proof}

\begin{figure}
	\begin{subfigure}{\linewidth}
	\includegraphics[width=0.55\linewidth]{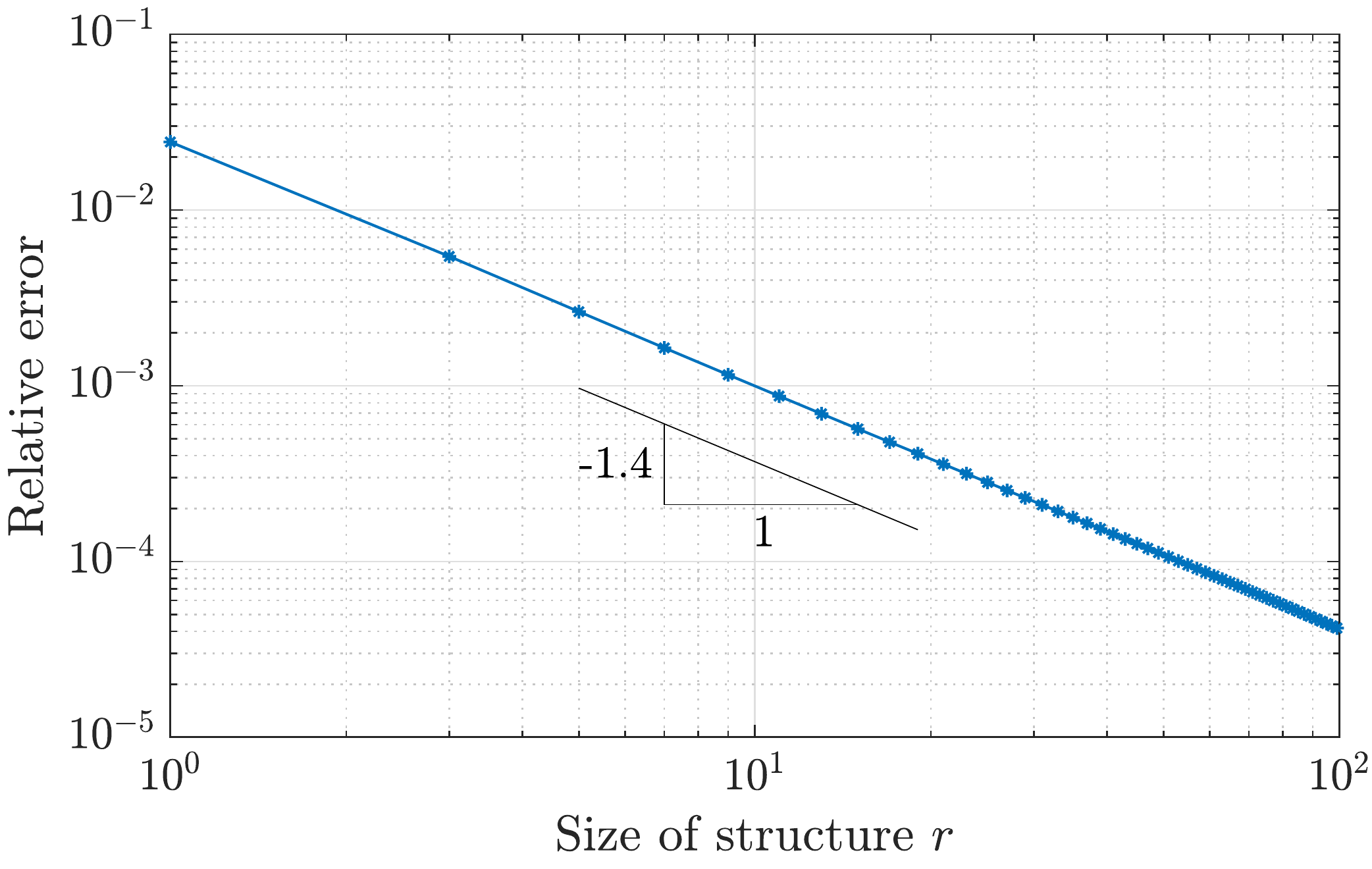}
	\begin{tikzpicture}
	\node at (0,0) {$\cdots$};
	\shade[ball color = gray!40] (0.5,0) circle (0.2);
	\draw (0.5,0) circle (0.2);
	\shade[ball color = gray!40] (1.2,0) circle (0.2);
	\draw (1.2,0) circle (0.2);
	\shade[ball color = gray!40] (1.9,0) circle (0.2);
	\draw (1.9,0) circle (0.2);
	\shade[ball color = gray!40] (2.6,0) circle (0.2);
	\draw (2.6,0) circle (0.2);
	\shade[ball color = gray!40] (3.3,0) circle (0.2);
	\draw (3.3,0) circle (0.2);
	\shade[ball color = gray!40] (4,0) circle (0.2);
	\draw (4,0) circle (0.2);
	\shade[ball color = gray!40] (4.7,0) circle (0.2);
	\draw (4.7,0) circle (0.2);
	\node at (5.25,0) {$\cdots$};
	\node[white] at (-0.5,-2.5) {.};
	\end{tikzpicture}
	\caption{One-dimensional lattice} \label{fig:err}
	\end{subfigure}

	\vspace{0.4cm}

	\begin{subfigure}{\linewidth}
		\includegraphics[width=0.55\linewidth]{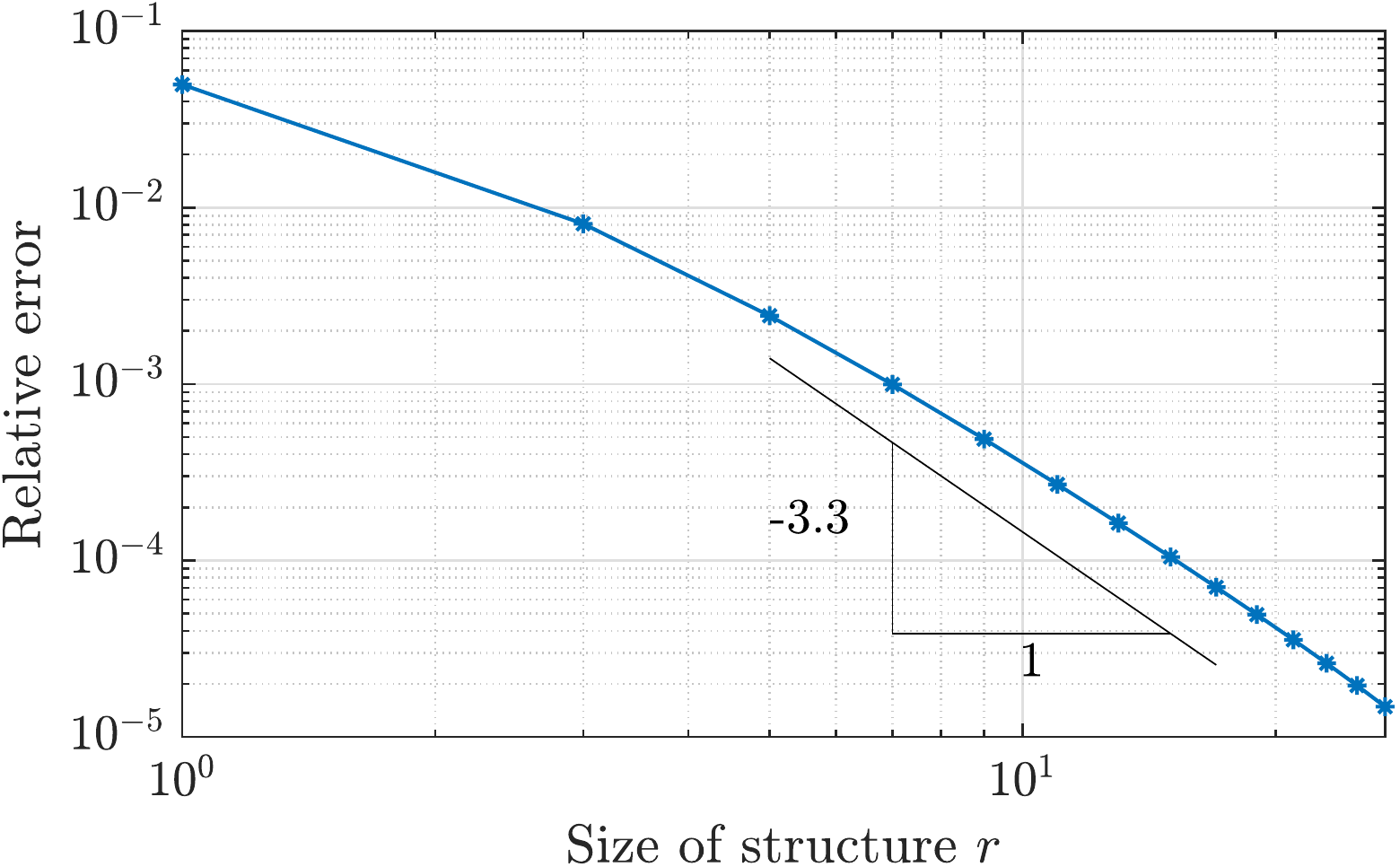}
	\begin{tikzpicture}
		\foreach \x in {1,2,...,5}{
		\foreach \y in {1,2,...,5}{
		\shade[ball color = gray!40] (0.7*\x,0.7*\y)  circle (0.2);
		\draw (0.7*\x,0.7*\y)  circle (0.2);
		\node at (0,0.7*\y) {$\cdots$};
		\node at (4.3,0.7*\y) {$\cdots$};
		\node at (0.7*\x,0.1) {$\vdots$};
		\node at (0.7*\x,4.3) {$\vdots$};
		}}
		\node[white] at (-1,-0.5) {.};
	\end{tikzpicture}
		\caption{Two-dimensional square lattice} \label{fig:errsq}
	\end{subfigure}
	
	\vspace{0.4cm}
	
	\begin{subfigure}{\linewidth}
	\includegraphics[width=0.55\linewidth]{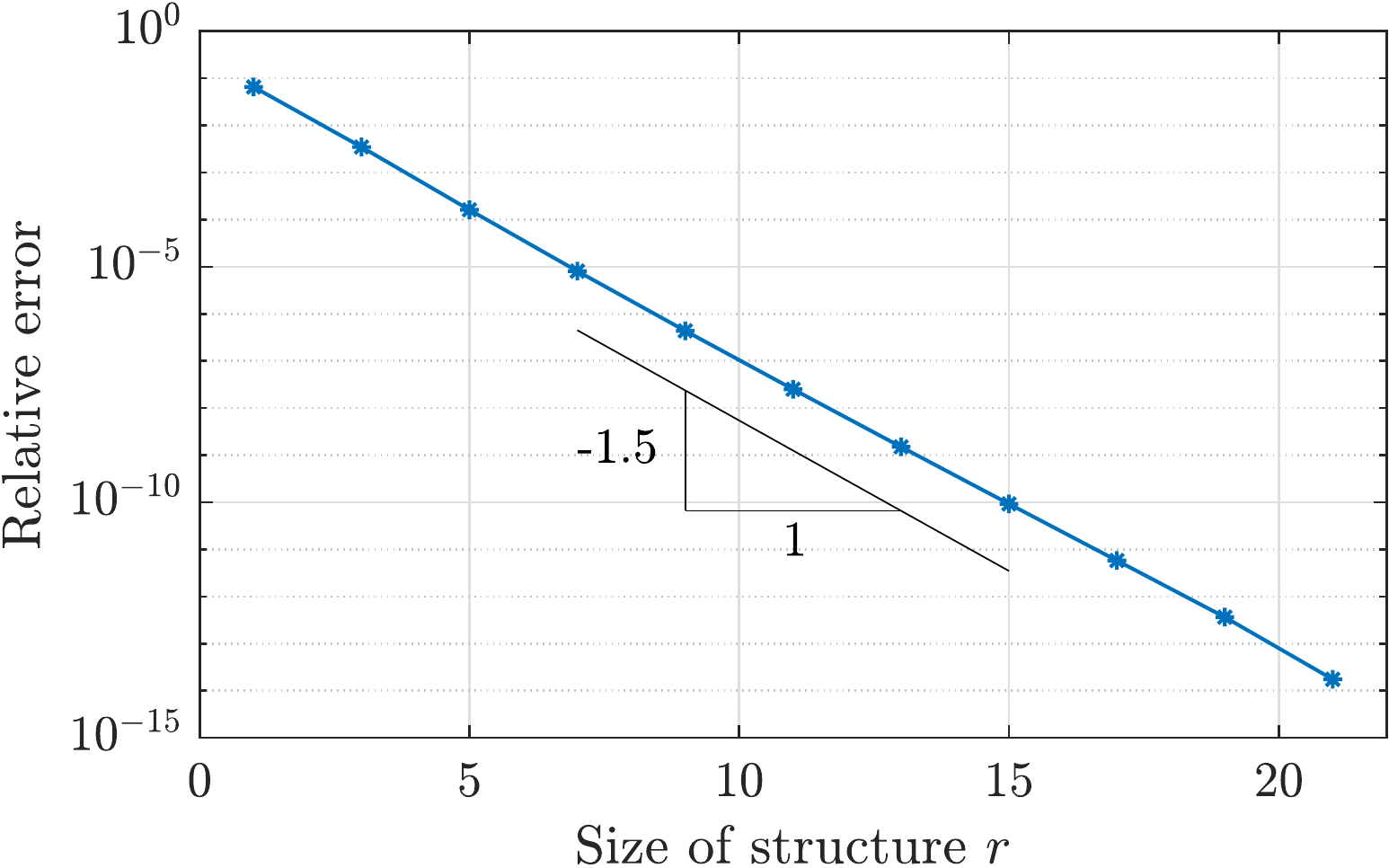}
	\begin{tikzpicture}
	
		\begin{scope}[xshift=0.15cm,yshift=0.15cm]
		\foreach \x in {1,2,...,5}{
		\foreach \y in {1,2,...,5}{
		\shade[ball color = gray!10] (0.7*\x,0.7*\y) circle (0.2);
		\draw (0.7*\x,0.7*\y) circle (0.2);
		}}
		\end{scope}
	
		\foreach \x in {1,2,...,5}{
		\foreach \y in {1,2,...,5}{
		\shade[ball color = gray!10] (0.7*\x,0.7*\y) circle (0.2);
		\draw (0.7*\x,0.7*\y) circle (0.2);
		\node at (0,0.7*\y) {$\cdots$};
		\node at (4.3,0.7*\y) {$\cdots$};
		\node at (0.7*\x,0.1) {$\vdots$};
		\node at (0.7*\x,4.3) {$\vdots$};
		}}
		
		\begin{scope}[xshift=-0.15cm,yshift=-0.15cm]
		\foreach \x in {1,2,...,5}{
		\foreach \y in {1,2,...,5}{
		\shade[ball color = gray!10] (0.7*\x,0.7*\y) circle (0.2);
		\draw (0.7*\x,0.7*\y) circle (0.2);
		}}
		\end{scope}
		
		\node[white] at (-1,-0.5) {.};
	\end{tikzpicture}
	\caption{Three-dimensional cubic lattice} \label{fig:err_cube}
	\end{subfigure}
	\caption{Convergence of the capacitance coefficient of large finite lattices. For each lattice, there is a single resonator in the unit cell ($N=1$). In each case, we plot $|(C_\frm)^0_{11}-C^0_{11}|$ for increasing size $r$ of the finite structure. Observe the log-log scales in (a) and (b) and the semi-log scale in (c), which correspond to algebraic and exponential convergence, respectively.} \label{fig:capconvergence}
\end{figure}

The numerical results presented in \Cref{fig:capconvergence} demonstrate the convergence of the capacitance coefficients, as established by \Cref{thm:approx}. We plot $|(C_\frm)^0_{11}-C^0_{11}|$ for a one-dimensional, two-dimensional, and three-dimensional lattice. When $d=1$ or $d=2$, the convergence is algebraic. The crucial property here is that the dimension of the lattice is less than the three dimensions of the underlying differential problem. As a result, waves are able to propagate away from the structure in the ``spare'' dimensions. This introduces long-range interactions to the system, as non-adjacent resonators are able to interact by coupling with the far field. Conversely, when $d=3$ the dimension of the lattice is maximal (in the sense that we have a three-dimensional lattice in three-dimensional space). In this case, we see exponential convergence as there are no ``spare'' dimensions that allow waves to propagate away from the structure and couple with the far field.

\section{Convergence to pure point spectrum} \label{sec:purep}
In this section, we study a problem where the infinite structure has a pure point spectrum, corresponding to a localized mode. We introduce a defect to the model in order to create such a mode. For a finite, truncated structure, there will be an eigenvalue arbitrarily close to the pure point spectrum.

\subsection{Example of a defect structure}\label{sec:defect}
Before developing any convergence theory, we present an example of a defect structure exhibiting a pure point spectrum, corresponding to a localized mode. We take a lattice with a single resonator $N = 1$ inside each unit cell. We take a single resonator with perturbed (``defect'') material parameter. In other words,
\begin{equation}
	b_1^m = \begin{cases} 1, \ &m\neq 0, \\ 1 + \eta, \ &m=0, \end{cases}
\end{equation}
for some parameter $\eta > -1$. Observe that $\eta=0$ corresponds to the unperturbed case, and $|\eta|$ describes the magnitude of the perturbation. The eigenvalues of the (infinite-dimensional) generalized capacitance matrix $\Bf\Cf$ in this setting was studied in \cite{anderson}. It was found that $\lambda$ is an eigenvalue of $\Bf\Cf$ if and only if it is a root of the equation
\begin{equation} \label{eq:M1}
	\frac{\eta}{|Y^*|}\int_{Y^*} \frac{\lambda_1^\alpha }{\lambda-\lambda_1^\alpha} \dx \alpha = 1,
\end{equation}
where $\lambda_1^\alpha$ is the single eigenvalue of the quasi-periodic capacitance matrix $\widehat{C}^\alpha$ of the unperturbed periodic structure. This equation has a solution $\lambda=\lambda_0$ precisely in the case $\eta>0$. In other words, the defect induces an eigenvalue $\lambda_0$ in the pure point spectrum of $\Bf\Cf$, corresponding to an exponentially localized eigenmode. An example of such a localized eigenmode was shown in \Cref{fig:modeplot}.

\subsection{Convergence of defect modes}
In this section, we prove that, if the infinite structure has a localized mode, there will be an eigenvalue of the truncated structure arbitrarily close to the localized frequency.

We let $\Cf$ denote the infinite capacitance matrix. As before, we let $C_\frm$ denote the capacitance matrix of a finite structure of size $N|I_r|\times N|I_r|$. Furthermore, we let $C_\trm$ denote the truncated matrix of $\Cf$ of size  $N|I_r|\times N|I_r|$, and similarly let $B_\trm$ be the truncation of $\Bf$. At this point, we emphasize that $C_\trm$ is ``nonphysical'' in the sense that it does not correspond to a capacitance matrix associated to any physical structure but, rather, to the finite matrix obtained by simply truncating the infinite matrix $\Cf$.

We assume that $\Bf\Cf$ has a localized eigenmode $\uf$, and let $u_\trm$ be the truncation of $\uf$ of size $N|I_r|$. The first result follows only from the decay of the localized mode.
\begin{lem}\label{lem:ct}
	Assume that $\Bf$ is a compact perturbation of the identity, such that $\Bf\Cf$ has a localized eigenmode $\uf$ with corresponding eigenvalue $\lambda$. Then there is an eigenvalue $\tilde\lambda = \tilde\lambda(r)$ of $ B_\trm C_\trm$ satisfying
	$$\lim_{r\to \infty}\tilde\lambda(r) = \lambda.$$
\end{lem}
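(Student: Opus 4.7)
The plan is to use the truncation $u_\trm$ of the localized eigenmode itself as a trial vector for $B_\trm C_\trm$, show that the associated residual is small, and then extract an actual eigenvalue via a Hermitian perturbation argument. First, reading off block $m\in I_r$ from the infinite eigenvalue equation $(\Bf\Cf\uf)^m = \lambda u^m$ and comparing with $(B_\trm C_\trm u_\trm)^m = B^m\sum_{n\in I_r} C^{m-n}u^n$, the residual $r_\trm := B_\trm C_\trm u_\trm - \lambda u_\trm$ is given blockwise by
\begin{equation*}
(r_\trm)^m = -B^m \sum_{n\in\Lambda\setminus I_r} C^{m-n} u^n, \qquad m\in I_r.
\end{equation*}

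Next, I would show that $\|r_\trm\|\to 0$ while $\|u_\trm\|\to \|\uf\|>0$. Via the inverse Floquet transform \eqref{eq:Crealspace}, $\Cf$ is unitarily equivalent to multiplication by the matrix-valued symbol $\alpha\mapsto\widehat C^\alpha$ on $L^2(Y^*;\Cb^N)$, so it defines a bounded operator on $\ell^2(\Lambda;\Cb^N)$ whose norm is independent of $r$. Since $\Bf$ equals the identity outside a finite set, $\|\Bf\|$ is also finite and $r$-independent, and the residual admits the bound
\begin{equation*}
\|r_\trm\| \leq \|\Bf\|\,\|\Cf\|\,\|P_{\Lambda\setminus I_r}\uf\|,
\end{equation*}
where $P_{\Lambda\setminus I_r}$ is the projection onto indices outside $I_r$. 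Because $\uf$ is localized and hence in $\ell^2(\Lambda;\Cb^N)$, the tail $\|P_{\Lambda\setminus I_r}\uf\|$ vanishes as $r\to\infty$, while $\|u_\trm\|\to\|\uf\|\neq 0$ by monotone convergence.

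Finally, under the standing assumption that the $b_i^m$ are positive (as in the physical setting of \Cref{sec:asymptotics} and the example of \Cref{sec:defect}), $B_\trm$ is positive definite, and $B_\trm C_\trm$ is similar via $B_\trm^{1/2}$ to the symmetric matrix $H_\trm := B_\trm^{1/2} C_\trm B_\trm^{1/2}$, so the two share the same spectrum. Setting $v_\trm := B_\trm^{-1/2} u_\trm$, a direct computation gives $(H_\trm - \lambda I) v_\trm = B_\trm^{-1/2} r_\trm$ and hence
\begin{equation*}
\frac{\|(H_\trm - \lambda I) v_\trm\|}{\|v_\trm\|} \leq \kappa(B_\trm^{1/2})\,\frac{\|r_\trm\|}{\|u_\trm\|} \longrightarrow 0,
\end{equation*}
where the condition number $\kappa(B_\trm^{1/2})$ is uniformly bounded in $r$ because $\Bf$ differs from the identity on only finitely many diagonal entries. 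The standard min-max (Bauer–Fike) bound for symmetric matrices then yields an eigenvalue $\tilde\lambda(r)$ of $H_\trm$, and therefore of $B_\trm C_\trm$, satisfying $|\tilde\lambda(r)-\lambda|\to 0$.

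The main obstacle is securing the $\ell^2$-boundedness of $\Cf$ uniformly in the truncation; this reduces to uniform boundedness of $\alpha\mapsto\widehat C^\alpha$ on the Brillouin zone $Y^*$, which follows from the smooth invertibility of $\S_D^\alpha$ for $\alpha\neq 0$ and the regularity of the quasi-periodic Green's function away from $\alpha=0$. Everything else—the residual decomposition, the tail estimate for $\uf$, and the symmetrization—is then routine.
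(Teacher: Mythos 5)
Your proposal is correct and follows essentially the same route as the paper: truncate the localized eigenmode, bound the residual using the boundedness of $\Bf\Cf$ together with the $\ell^2$-decay of $\uf$, and convert the small residual into a nearby eigenvalue. The only difference is in the final step, where the paper appeals to normality of $B_\trm C_\trm$ and its pseudospectrum, whereas you symmetrize via $B_\trm^{1/2}$ and invoke Bauer--Fike; your version is arguably the more careful one, since $B_\trm C_\trm$ is not literally normal for a general diagonal $B_\trm$ but only similar to a symmetric matrix when $B_\trm$ is positive definite.
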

\begin{proof}
	We let $\uf_\trm$ be the infinite vector obtained by padding $u_\trm$ with $0$. Since $\uf$ is in $\ell^2(\Lambda)$, for any $\epsilon > 0$ we can choose large enough $r$ so that 
	$$\|\uf-\uf_\trm\|_{\ell^2} < \epsilon.$$
	Since $\Bf\Cf$ is a bounded operator, we then have
	$$\|\lambda \uf -\Bf\Cf \uf_\trm\|_{\ell^2} < K\epsilon,$$
	for some $K>0$. Restricting to the finite block of size $r$, we have
	$$\|\lambda u_\trm -  B_\trm C_\trm u_\trm\|_2 < K\epsilon.$$
	In other words, $\lambda$ is in the $K\epsilon$-pseudospectrum of $ B_\trm C_\trm$, and since $ B_\trm C_\trm$ is normal, we have an eigenvalue $\tilde \lambda$ of $ B_\trm C_\trm$ satisfying 
	$$|\tilde \lambda(r) - \lambda| = K\epsilon.$$
	This proves the claim.
\end{proof}

Next, we study the properties of $C_\frm$ as the size of the finite structure increases.
\begin{lem}\label{lem:dirichlet}
	For $i=1,...,N+1$, assume that $B_i \subset\R^3$ are disjoint, connected domains and let
	$$B=\bigcup_{n=1}^N B_i \quad \widetilde B = \bigcup_{n=1}^{N+1} B_i.$$
	Let $C_{ij}$, $\widetilde C_{ij}$ denote the capacitance coefficients associated to $B$ and $\widetilde{B}$, respectively. Then
	$$C_{ii} \leq \widetilde{C}_{ii} \quad i=1,...,N.$$
\end{lem}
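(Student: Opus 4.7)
My plan is to prove the inequality via the Dirichlet-energy (variational) characterization of the diagonal capacitance, a standard monotonicity principle from potential theory. For each $i \in \{1,\dots,N\}$, set $\psi_i := \mathcal{S}_B^{-1}[\chi_{\p B_i}]$ and define the harmonic potential $V_i(x) := \mathcal{S}_B[\psi_i](x)$ on $\R^3$. Then $V_i$ is harmonic off $\p B$, equals $1$ on $\p B_i$ and $0$ on every other $\p B_k$, and decays as $|x|^{-1}$ at infinity; uniqueness of the interior Dirichlet problem on each bounded component forces $V_i \equiv 1$ on $B_i$ and $V_i \equiv 0$ on $B_k$ for $k \neq i$, so $\nabla V_i$ is supported in $\R^3 \setminus \overline{B}$. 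Define $\widetilde{V}_i$ analogously from $\mathcal{S}_{\widetilde B}^{-1}[\chi_{\p B_i}]$; it satisfies the same conditions together with the extra boundary requirement $\widetilde{V}_i|_{\p B_{N+1}} = 0$.

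The first step is the energy identity
\[
|C_{ii}| \;=\; \int_{\R^3} |\nabla V_i|^2 \dx x, \qquad |\widetilde{C}_{ii}| \;=\; \int_{\R^3} |\nabla \widetilde{V}_i|^2 \dx x,
\]
derived by applying Green's identity on each bounded component of $\R^3 \setminus \p B$ and on the exterior, invoking the jump relation $\p_\nu V_i|_+ - \p_\nu V_i|_- = \psi_i$ for the single-layer potential, and noting that the contribution at infinity vanishes because $V_i = O(|x|^{-1})$ and $|\nabla V_i| = O(|x|^{-2})$. The second step is the Dirichlet principle: $V_i$, restricted to the exterior of $B$, minimizes the Dirichlet energy over the admissible class
\[
\mathcal{A}_B := \{\phi \in H^1_{\mathrm{loc}}(\R^3 \setminus \overline{B}) \,:\, \nabla \phi \in L^2,\ \phi|_{\p B_i}=1,\ \phi|_{\p B_k}=0 \text{ for } k \in \{1,\dots,N\} \setminus \{i\}\},
\]
and $\widetilde{V}_i$ analogously minimizes the Dirichlet energy over the class $\mathcal{A}_{\widetilde{B}}$, which additionally imposes the trace condition $\phi|_{\p B_{N+1}}=0$.

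The final step is the key observation that extension by zero across $\p B_{N+1}$ embeds $\mathcal{A}_{\widetilde{B}}$ into $\mathcal{A}_B$ while preserving the Dirichlet energy: given $\phi \in \mathcal{A}_{\widetilde{B}}$, its zero-extension $\phi^\star$ across $B_{N+1}$ lies in $H^1_{\mathrm{loc}}(\R^3 \setminus \overline{B})$ (because the interior trace $0$ and the exterior trace $\phi|_{\p B_{N+1}}=0$ agree), belongs to $\mathcal{A}_B$, and satisfies $\int|\nabla \phi^\star|^2 \dx x = \int |\nabla \phi|^2 \dx x$. Consequently,
\[
|C_{ii}| \;=\; \min_{\phi \in \mathcal{A}_B} \int |\nabla \phi|^2 \dx x \;\leq\; \min_{\phi \in \mathcal{A}_{\widetilde{B}}} \int |\nabla \phi|^2 \dx x \;=\; |\widetilde{C}_{ii}|,
\]
and the positivity of $C_{ii}$ and $\widetilde{C}_{ii}$ established in the preceding lemma upgrades this to $C_{ii} \leq \widetilde{C}_{ii}$.

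The main obstacle is the careful verification of the energy identity in the first step: one must manage the boundary terms at infinity using the decay of $V_i$ and assemble the interior and exterior contributions consistently with the paper's sign convention $G(x) = -1/(4\pi|x|)$. Once this identity is in hand, the extension-by-zero step reduces the lemma to the elementary statement that minimizing over a smaller admissible set gives a larger value, which is the heart of why adding a new conductor only raises the self-capacitance of each of the existing ones.
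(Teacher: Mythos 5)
Your proof is correct and follows essentially the same route as the paper: the variational (Dirichlet-energy) characterization of the diagonal capacitance combined with the observation that the admissible class for $\widetilde B$ embeds into the one for $B$, so the minimum can only increase. The paper states this more tersely (taking trial functions on all of $\R^3$, which makes the inclusion of admissible classes immediate without an explicit zero-extension), while your derivation of the energy identity and the $|C_{ii}|$ hedge against the sign convention $G(x)=-1/(4\pi|x|)$ are harmless extra care.
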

\begin{proof}
	We will use a variational characterization of the capacitance coefficients. Let $\Hc = \{v\in H^1_\text{loc}(\R^3) \mid v(x) \sim |x|^{-1} \text{ as } x\to \infty \}$ and let
	\begin{align*}
		\V &= \{v\in \Hc \mid v|_{\p B_j} = \delta_{ij} \text{ for } j=1,...,N\},\\
		\widetilde\V &= \{v\in \Hc \mid v|_{\p B_j} = \delta_{ij} \text{ for } j=1,...,N+1\}.
	\end{align*}
Observe that  $\widetilde\V \subset \V$. It then follows that
$$C_{ii} = \min_{v\in \V} \int_{\R^3}|\nabla v|^2\dx x \leq  \min_{v\in \widetilde\V} \int_{\R^3}|\nabla v|^2\dx x = \widetilde {C}_{ii}.$$ 
\end{proof}
\begin{rmk}
	\Cref{lem:dirichlet} states that the diagonal capacitance coefficients will always increase when adding additional resonators. In the physical situation of electrostatics this result is intuitive: the self-capacitance of a conductor can only increase if additional conductors are introduced.
\end{rmk}

\begin{lem}\label{lem:K}
	As $r\to \infty$, we have $\|C_\frm\|_2 < K$ for some $K$ independent of $r$. 
\end{lem}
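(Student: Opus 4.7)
The plan is to reduce the claim to a uniform bound on the diagonal entries $(C^{mm}_\frm)_{ii}(r)$ via Gershgorin discs, and then to control those diagonals by combining the monotonicity of \Cref{lem:dirichlet} with the pointwise convergence of \Cref{thm:approx}. Since $C_\frm$ is real symmetric and positive definite by part (i) of the preceding lemma, $\|C_\frm\|_2$ equals its largest eigenvalue. Applying Gershgorin's circle theorem to the row indexed by $(m,i)$ together with the strict diagonal dominance from part (ii), every eigenvalue $\lambda$ lies in a disc centred at $(C^{mm}_\frm)_{ii}(r)$ of radius $\sum_{(n,j)\neq(m,i)} |(C^{mn}_\frm)_{ij}(r)| < (C^{mm}_\frm)_{ii}(r)$, so
\[
\|C_\frm\|_2 < 2 \max_{m \in I_r,\, 1\leq i\leq N} (C^{mm}_\frm)_{ii}(r).
\]
The task is now to bound these diagonal entries independently of $r$.

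For the uniform diagonal bound, I would fix any $m \in \Lambda$ and $i \in \{1,\ldots,N\}$ and restrict to $r > |m|$ so that $m \in I_r$. The finite collections $\D_\frm(r)$ are nested in $r$, so iterating \Cref{lem:dirichlet} each time a new resonator is added shows that $r \mapsto (C^{mm}_\frm)_{ii}(r)$ is monotone non-decreasing. \Cref{thm:approx} identifies the limit of this sequence as $C^0_{ii}$, so $(C^{mm}_\frm)_{ii}(r) \leq C^0_{ii}$ for every admissible $r$. Since the right-hand side is independent of $m$ (by the translation invariance already built into the limit in \Cref{thm:approx}) and $i$ ranges over the finite set $\{1,\ldots,N\}$, setting $K := 2 \max_{1\leq i\leq N} C^0_{ii}$ gives $\|C_\frm\|_2 < K$ uniformly in $r$.

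The one subtlety worth flagging is the interchange between ``fix $(m,i)$ and let $r \to \infty$'', which is what the monotone convergence argument requires, and ``take a maximum over $(m,i)$ with $m \in I_r$ growing with $r$'', which is what the Gershgorin estimate delivers. This is benign here because the pointwise bound $(C^{mm}_\frm)_{ii}(r) \leq C^0_{ii}$ has a right-hand side that does not depend on $m$, so it is inherited by the supremum without further work; in particular the bound applies uniformly even to edge resonators. Finiteness of $C^0_{ii}$ is part of the standing definition of the real-space capacitance matrix $\Cf$ in \eqref{eq:Crealspace} and does not require separate justification.
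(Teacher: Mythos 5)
Your proposal is correct and follows essentially the same route as the paper: strict diagonal dominance (via Gershgorin) reduces the operator norm to a bound on the diagonal entries, and those are controlled by the monotonicity of \Cref{lem:dirichlet} together with the limit $C^0_{ii}$ identified in \Cref{thm:approx}. The only difference is that you make explicit the Gershgorin step and the uniformity over edge resonators, both of which the paper leaves implicit.
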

\begin{proof}
	We know that the capacitance matrix $C_\frm$ is diagonally dominant:
	$$(C^{mm}_\frm)_{ii} > \sum_{n\in \Z, j\neq i} \bigl|(C^{mn}_\frm)_{ij}\bigr|,$$
	for any $i,m$. For fixed $i$ and $m$, we know from \Cref{lem:dirichlet} that $(C^{mm}_\frm)_{ii}(r)$ is increasing in $r$, and for all $r$ we have
	$$(C^{mm}_\frm)_{ii}(r) < C^{0}_{ii},$$
	where, as before, $C_{ii}^0$ is the corresponding entry of the infinite capacitance matrix $\Cf$. In particular, the eigenvalues of $C_\frm(r)$ are bounded as $r\to \infty$, which shows the claim.
\end{proof}

As discussed above, the matrix $C_\trm$ appearing in \Cref{lem:ct} is nonphysical, as it is a truncation of the matrix for the infinite system. Instead, we need to phrase the result for the matrix $C_\frm$, which describes the finite system. The following theorem is the main result of this section.
\begin{thm}\label{thm:pp}
	Assume that $\Bf$ is a compact perturbation of the identity, such that $\Bf\Cf$ has a localized eigenmode $\uf$ with corresponding eigenvalue $\lambda$. Then there is an eigenvalue $\hat\lambda = \hat\lambda(r)$ of $ B_\trm C_\frm$ satisfying
	$$\lim_{r\to \infty}\hat\lambda(r) = \lambda.$$	
\end{thm}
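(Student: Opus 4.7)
The strategy is to mimic the proof of \Cref{lem:ct}, using the same truncated mode $u_\trm$ as an approximate eigenvector for the physical matrix $B_\trm C_\frm$. Writing the residual as
\begin{equation*}
\lambda u_\trm - B_\trm C_\frm u_\trm = \bigl(\lambda u_\trm - B_\trm C_\trm u_\trm\bigr) + B_\trm(C_\trm - C_\frm)u_\trm,
\end{equation*}
the first summand is already controlled by (the proof of) \Cref{lem:ct}: for any $\epsilon>0$, it has norm below $K\epsilon$ once $r$ is large enough. The new task is therefore to show that $\|B_\trm(C_\trm - C_\frm)u_\trm\|_2 \to 0$ as $r\to\infty$.

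The difficulty is that $C_\frm - C_\trm$ need not be small in operator norm, so it cannot be treated as a uniform perturbation; it must be handled by combining the entrywise convergence of \Cref{thm:approx} with the spatial localization of $\uf$. Given $\epsilon>0$, I first pick $R$ so large that the tail satisfies $\|(I-P_R)\uf\|_{\ell^2}<\epsilon$, where $P_R$ is the projection onto indices $|m|\leq R$; this is possible because $\uf\in\ell^2(\Lambda)$. Split $u_\trm = v+w$ with $v = P_R u_\trm$. The tail part is easy: $\|B_\trm(C_\trm-C_\frm)w\|_2$ is bounded by $\|\Bf\|(\|\Cf\|+\|C_\frm\|)\epsilon$, where $\|C_\frm\|$ is uniformly bounded by \Cref{lem:K} and $\|\Cf\|$ is finite (each row of $\Cf$ is in $\ell^1$ by the decay of capacitance coefficients). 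For the head part $\|B_\trm(C_\trm-C_\frm)v\|_2$, I split the output indices in the same manner: for $m$ with $|m|\leq R'$ (fixed), only finitely many entries of $C_\trm - C_\frm$ appear, and each converges to zero as $r\to\infty$ by \Cref{thm:approx}; for $|m|>R'$, both $(C_\frm)^{mn}_{ij}$ and $C^{m-n}_{ij}$ are small when $|m-n|$ is large, since the capacitance coefficients decay in $|m-n|$ (algebraically for $d<3$, exponentially for $d=3$, as shown by \Cref{fig:capconvergence} and the layer-potential bounds behind \Cref{thm:approx}). Choosing $R'$ large enough, followed by $r$ large enough, makes this contribution below $\epsilon$.

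Combining the three estimates gives $\|\lambda u_\trm - B_\trm C_\frm u_\trm\|_2 = o(1)$. To convert this into an eigenvalue statement I use that, since the diagonal entries $b_i^m$ are positive, the similarity $A \mapsto B_\trm^{1/2} A B_\trm^{-1/2}$ turns $B_\trm C_\frm$ into the real symmetric matrix $B_\trm^{1/2} C_\frm B_\trm^{1/2}$, preserving spectrum. The standard pseudospectral inequality for self-adjoint matrices then yields an eigenvalue $\hat\lambda(r)$ of $B_\trm C_\frm$ with $|\hat\lambda(r)-\lambda|=o(1)$, proving the theorem. The main obstacle I anticipate is the "far-field" piece of the head estimate: the entrywise convergence supplied by \Cref{thm:approx} alone is not enough to control a growing $\ell^2$-column, so one genuinely needs the decay of the capacitance coefficients in $|m-n|$ that the layer-potential framework provides.
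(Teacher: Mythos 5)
Your proposal follows essentially the same route as the paper's proof: use the truncated localized mode $u_\trm$ as an approximate eigenvector of $B_\trm C_\frm$, reduce to \Cref{lem:ct} for the $B_\trm C_\trm$ part, and control $B_\trm(C_\frm-C_\trm)u_\trm$ by splitting $u_\trm$ into a head supported on a fixed smaller sublattice (handled via \Cref{thm:approx} and the off-diagonal behaviour of the capacitance coefficients) and a tail (handled via the $\ell^2$-decay of $\uf$ together with the uniform norm bound of \Cref{lem:K}), before concluding with a pseudospectral argument. The paper's proof is the same decomposition written with an intermediate radius $r_0$, and your explicit symmetrization via $B_\trm^{1/2}C_\frm B_\trm^{1/2}$ is, if anything, a more careful justification of the final pseudospectral step than the paper's direct appeal to normality.
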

\begin{proof}
	We let
	$$ K_1 = \sup_{r>0} \|C_\frm(r) - C_\trm\|_2,$$
	and observe from \Cref{lem:K} that $K < \infty$. We also let
	$$K_2 = \|B_{\frm}\|_2.$$
	Given $\epsilon >0$, we pick $r_0>0$ such that the following four terms are small:
	$$\|C_{0,\frm} - C_{0,\trm}\|_2 < \frac{\epsilon}{4K_2}, \quad \|u_\trm - u_{0,\trm}\|_2 < \frac{\epsilon}{4K_1K_2}, \quad \|B_{\frm}(C_\trm - C_{0,\trm})u_{0,\trm}\|_2 < \frac{\epsilon}{4}, \quad \|B_{\frm}(C_\frm - C_{0,\frm})u_{0,\trm}\|_2 < \frac{\epsilon}{4}$$
	for all $r$ large enough; the first inequality follows from \Cref{thm:approx} while the subsequent inequalities follow from the $\ell^2(\Lambda)$-decay of $u$. Here, $C_{0,\trm}, u_{0,\trm},$ and $C_{0,\frm}$ are the truncations of  $C_{\trm}, u_{\trm},$ and $C_{\frm}$ to the smaller lattice of radius $r_0$ (padded with zero where needed for the matrix operations).	We know from  \Cref{lem:ct} that we can take $r$ large enough so that $ B_\trm C_\trm$ has an eigenvalue $\tilde\lambda$ of distance  $\epsilon$ from $\lambda$. We then have
	\begin{multline}
		B_{\frm}C_\frm u_\trm = B_{\frm}C_\trm u_\trm + B_{\frm}(C_\frm - C_\trm)(u_\trm - u_{0,\trm}) + B_{\frm}(C_{0,\frm} - C_{0,\trm})u_{0,\trm} \\ + B_{\frm}(C_\frm - C_{0,\frm})u_{0,\trm} - B_{\frm}(C_\trm - C_{0,\trm})u_{0,\trm}.
	\end{multline}
	Then
	$$\|( B_\trm C_\frm - B_\trm C_\trm) u_\trm \|_2 < \epsilon,$$
	which means that there is an eigenvalue $\hat \lambda$ of distance $\epsilon$ from $\widetilde \lambda$, and hence $|\hat \lambda - \lambda| < 2\epsilon$.
\end{proof}

\begin{rmk}
	As an example, $\Bf$ and $\Cf$ as given in \Cref{sec:defect} satisfy the assumptions of \Cref{thm:pp}.
\end{rmk}

\subsection{Numerical illustration}

\begin{figure}
	\begin{subfigure}{\linewidth}
	\includegraphics[width=0.55\linewidth]{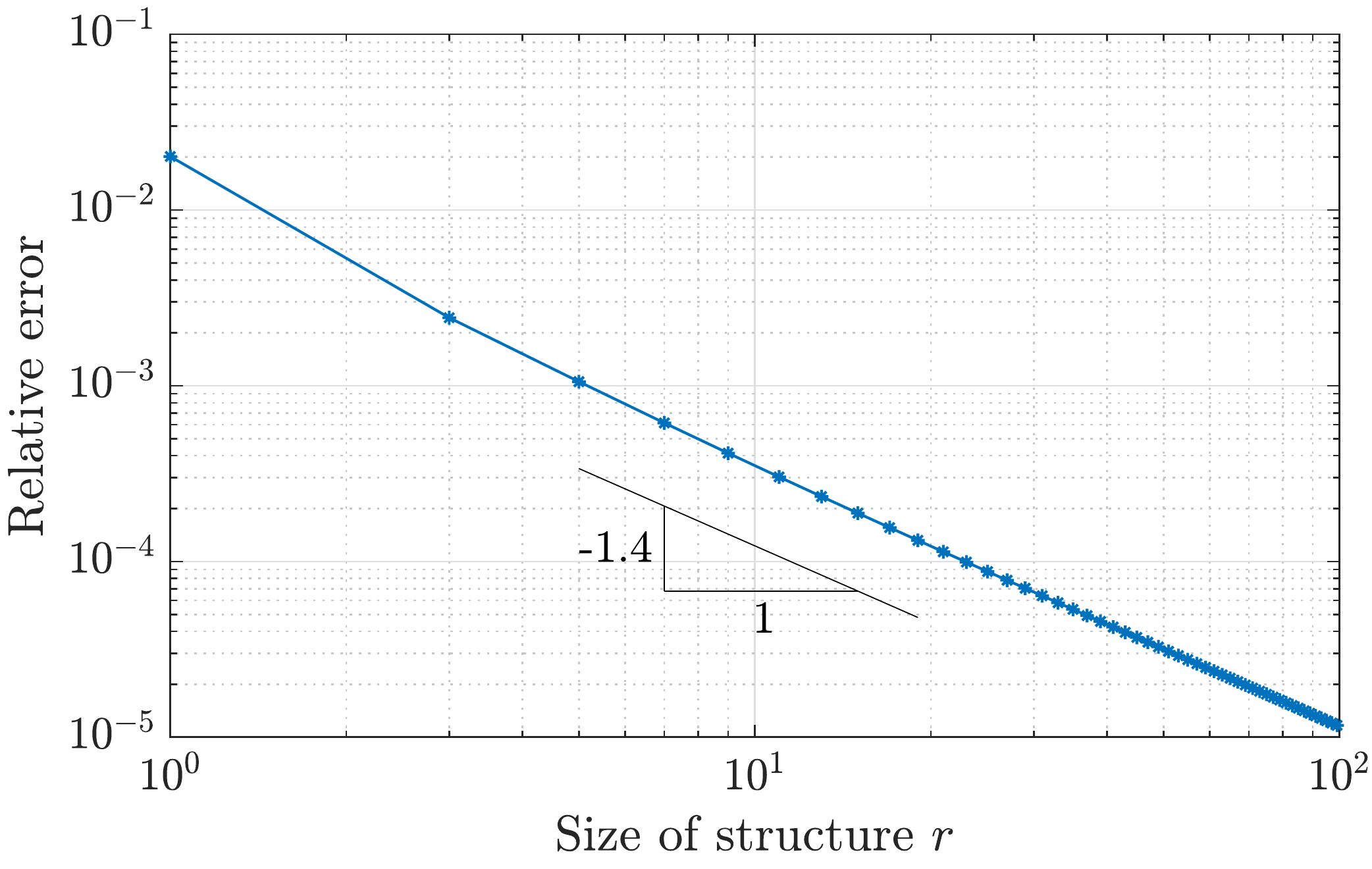}
	\begin{minipage}[b]{0.425\textwidth}
	\begin{tikzpicture}
	\node at (-0.2,0) {$\cdots$};
	\shade[ball color = gray!10] (0.5,0)  circle (0.2);
	\draw (0.5,0)  circle (0.2);
	\node at (0.5,0.4) {\scriptsize $1$};
	\shade[ball color = gray!10] (1.2,0)  circle (0.2);
	\draw (1.2,0)  circle (0.2);
	\node at (1.2,0.4) {\scriptsize $1$};
	\shade[ball color = gray!10] (1.9,0)  circle (0.2);
	\draw (1.9,0)  circle (0.2);
	\node at (1.9,0.4) {\scriptsize $1$};
	\shade[ball color = blue!90] (2.6,0)  circle (0.2);
	\draw (2.6,0)  circle (0.2);
	\node at (2.6,0.4) {\scriptsize $1{+}\eta$};
	\shade[ball color = gray!10] (3.3,0)  circle (0.2);
	\draw (3.3,0)  circle (0.2);
	\node at (3.3,0.4) {\scriptsize $1$};
	\shade[ball color = gray!10] (4,0)  circle (0.2);
	\draw (4,0)  circle (0.2);
	\node at (4,0.4) {\scriptsize $1$};
	\shade[ball color = gray!10] (4.7,0)  circle (0.2);
	\draw (4.7,0)  circle (0.2);
	\node at (4.7,0.4) {\scriptsize $1$};
	\node at (5.5,0) {$\cdots$};
	\end{tikzpicture}
	
	\vspace{0.6cm}
	
	\includegraphics[width=\linewidth]{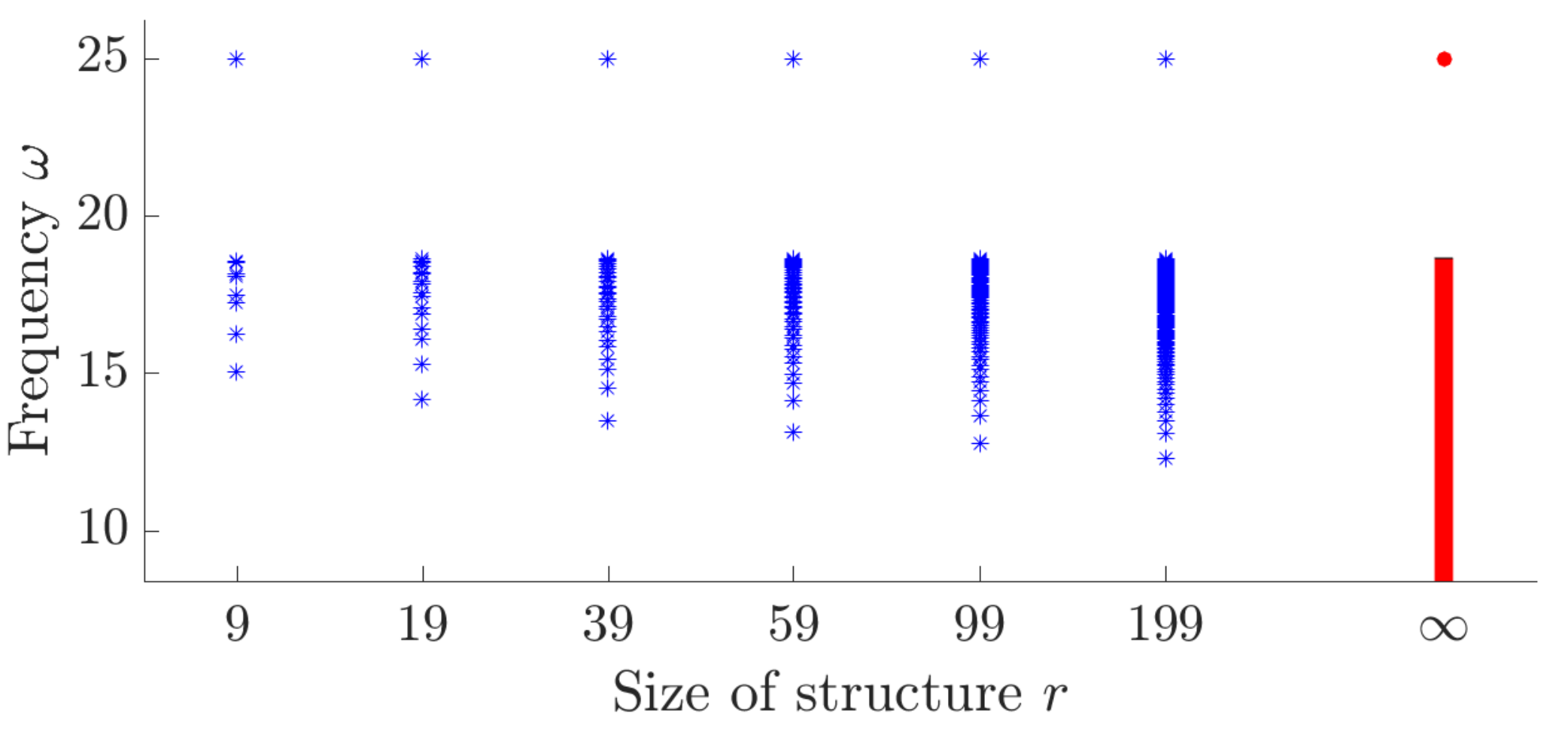}
	\end{minipage}
	\caption{One-dimensional lattice} \label{fig:defect}
	\end{subfigure}
	
	\vspace{0.2cm}

	\begin{subfigure}{\linewidth}
	\includegraphics[width=0.55\linewidth]{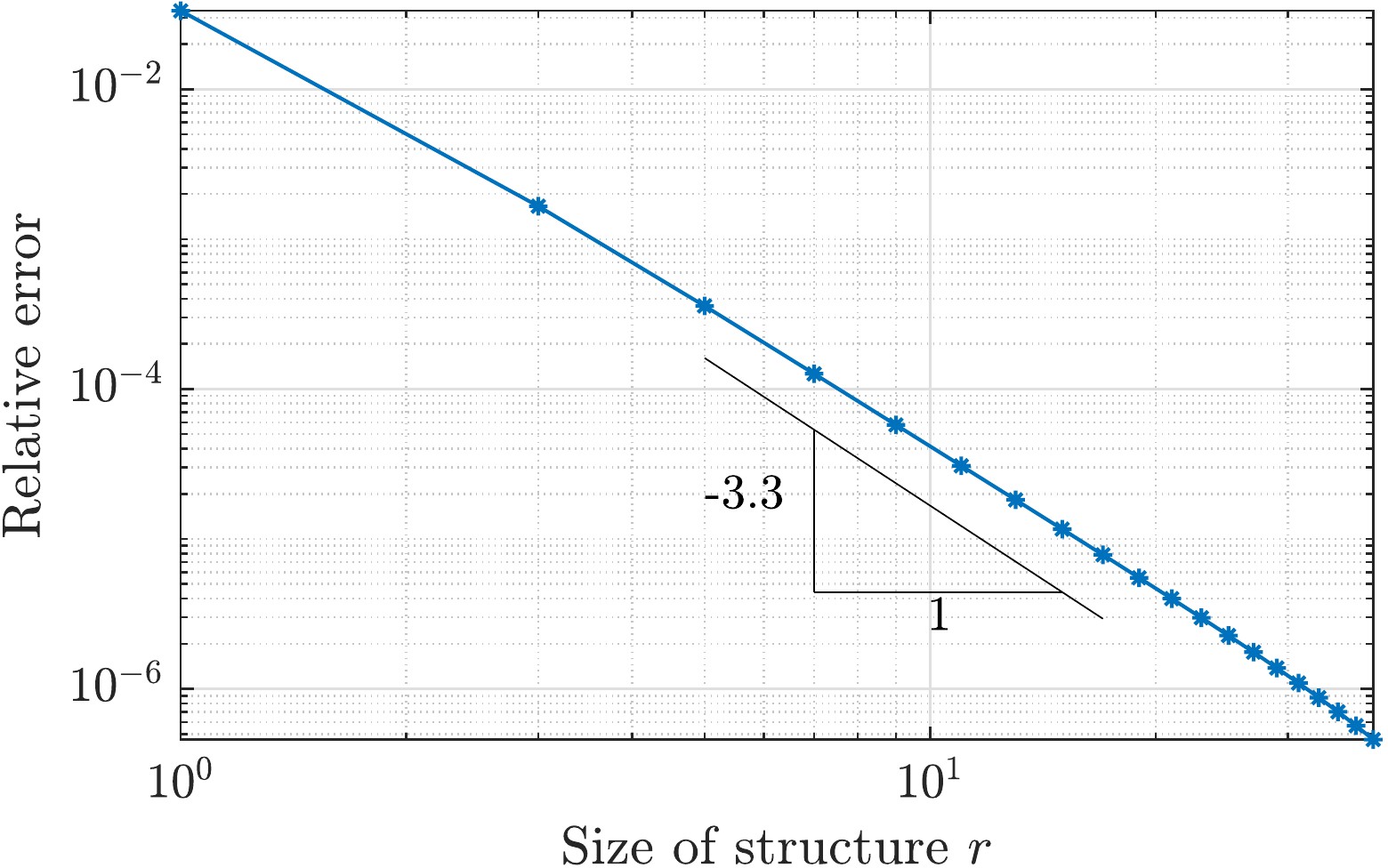}
	\begin{tikzpicture}
		\foreach \x in {1,2,...,5}{
		\foreach \y in {1,2,...,5}{
		\shade[ball color = gray!10] (0.7*\x,0.7*\y)  circle (0.2);
		\draw (0.7*\x,0.7*\y)  circle (0.2);
		\node at (0,0.7*\y) {$\cdots$};
		\node at (4.3,0.7*\y) {$\cdots$};
		\node at (0.7*\x,0.1) {$\vdots$};
		\node at (0.7*\x,4.3) {$\vdots$};
		}}
		\shade[ball color = blue!90] (0.7*3,0.7*3)  circle (0.2);
		\draw (0.7*3,0.7*3)  circle (0.2);
		\node[white] at (-1,-0.5) {.};
	\end{tikzpicture}
	\caption{Two-dimensional square lattice} \label{fig:defect_square}
	\end{subfigure}
	
	\vspace{0.2cm}
	
	\begin{subfigure}{\linewidth}
	\includegraphics[width=0.55\linewidth]{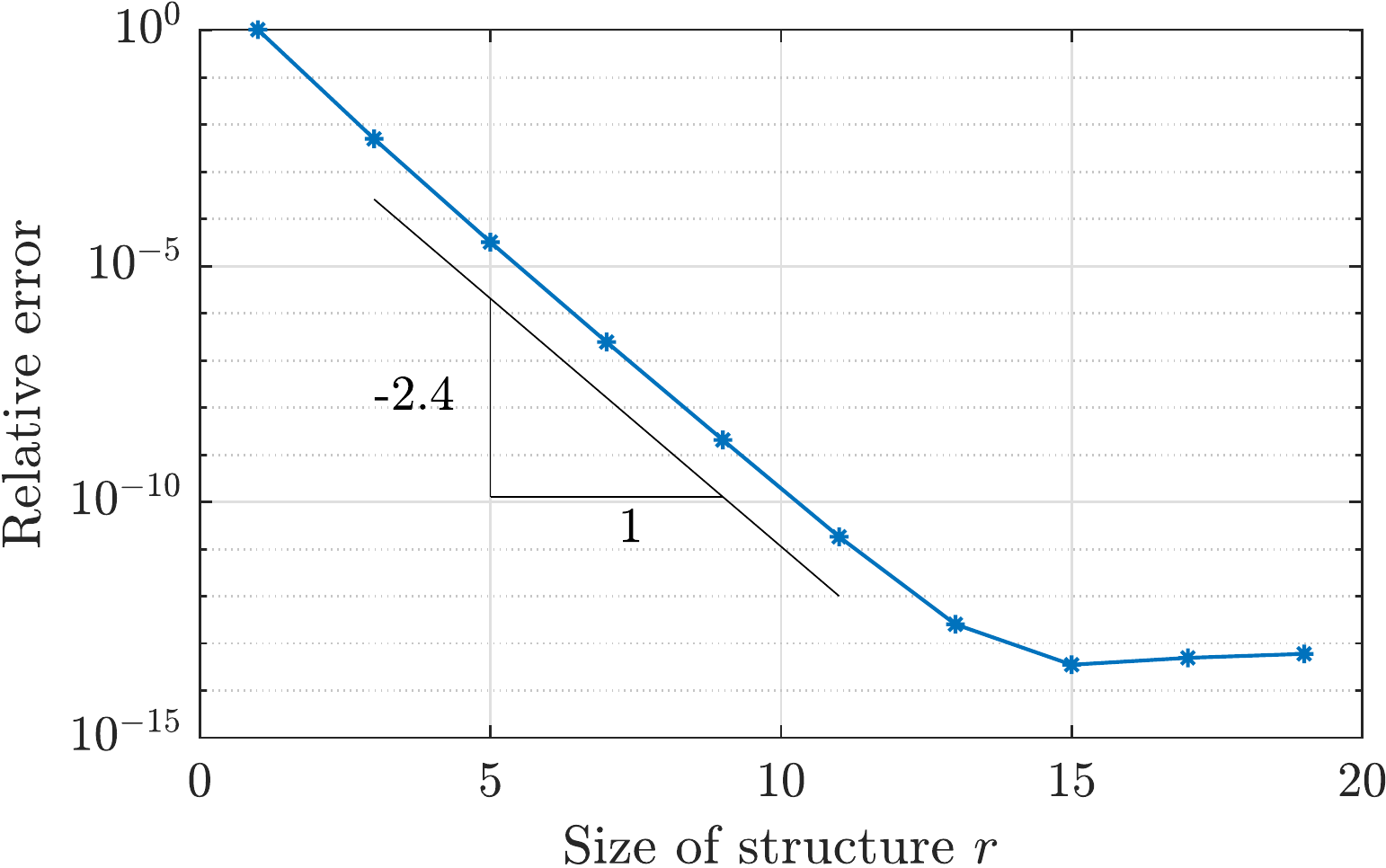}
	\begin{tikzpicture}
	
		\begin{scope}[xshift=0.15cm,yshift=0.15cm]
		\foreach \x in {1,2,...,5}{
		\foreach \y in {1,2,...,5}{
		\shade[ball color = gray!10] (0.7*\x,0.7*\y) circle (0.2);
		\draw (0.7*\x,0.7*\y) circle (0.2);
		}}
		\end{scope}
	
		\foreach \x in {1,2,...,5}{
		\foreach \y in {1,2,...,5}{
		\shade[ball color = gray!10] (0.7*\x,0.7*\y) circle (0.2);
		\draw (0.7*\x,0.7*\y) circle (0.2);
		\node at (0,0.7*\y) {$\cdots$};
		\node at (4.3,0.7*\y) {$\cdots$};
		\node at (0.7*\x,0.1) {$\vdots$};
		\node at (0.7*\x,4.3) {$\vdots$};
		}}
		\shade[ball color = blue!90] (0.7*3,0.7*3) circle (0.2);
		\draw (0.7*3,0.7*3) circle (0.2);
		
		\begin{scope}[xshift=-0.15cm,yshift=-0.15cm]
		\foreach \x in {1,2,...,5}{
		\foreach \y in {1,2,...,5}{
		\shade[ball color = gray!10] (0.7*\x,0.7*\y) circle (0.2);
		\draw (0.7*\x,0.7*\y) circle (0.2);
		}}
		\end{scope}
		
		\node[white] at (-1,-0.5) {.};
	\end{tikzpicture}
	\caption{Three-dimensional cubic lattice} \label{fig:defect_3d}
	\end{subfigure}
	\caption{Convergence of the frequency of the defect modes, for a defect on the central resonator (with $\eta=1$) created by perturbing a single entry of $\Bf$. (a) A one-dimensional lattice with a single resonator in the unit cell ($N=1$). The difference between the defect frequency computed for a finite structure and for the corresponding infinite structure scales as $O(r^{-1.4})$, where $r$ is the length of the truncated structure. The lower right plot shows the spectrum of successively larger lattices. In the geometry sketch on the right, the corresponding entry $b_1^m$ from the matrix $\Bf$ is shown above each resonator. (b) A two-dimensional square lattice with a single resonator in the unit cell ($N=1$). Here, the error scales as $O(r^{-3.3})$, where $r$ is the width of the (square) truncated structure. (c) A three-dimensional cubic lattice with a single resonator in the unit cell ($N=1$). Here, the error scales as $O(e^{-2.4r})$, where $r$ is the width of the (cubic) truncated structure.} \label{fig:point_defect_convergence}
\end{figure}

\Cref{fig:point_defect_convergence} shows the convergence of the difference between the defect frequency computed for a finite structure and for the corresponding infinite structure, computed analytically using \emph{e.g.} \eqref{eq:M1}. Comparing \Cref{fig:point_defect_convergence} with \Cref{fig:capconvergence}, it appears that the error of the frequency of the defect mode is inheriting the convergence rate of the capacitance coefficients. In other words, when $d=1$ or $d=2$, there are long-range interactions through coupling with the far-field, leading to algebraic convergence. In $d=3$, there are no ``spare'' dimensions and the convergence is exponential. This is consistent with the results for one-dimensional models \cite{lin2013resonances, lin2016perturbation, lu2022defect} and for two-dimensional lattices in two-dimensional problems \cite{lin2015twodim}.

\begin{figure}
	\centering
	\includegraphics[width=0.33\linewidth]{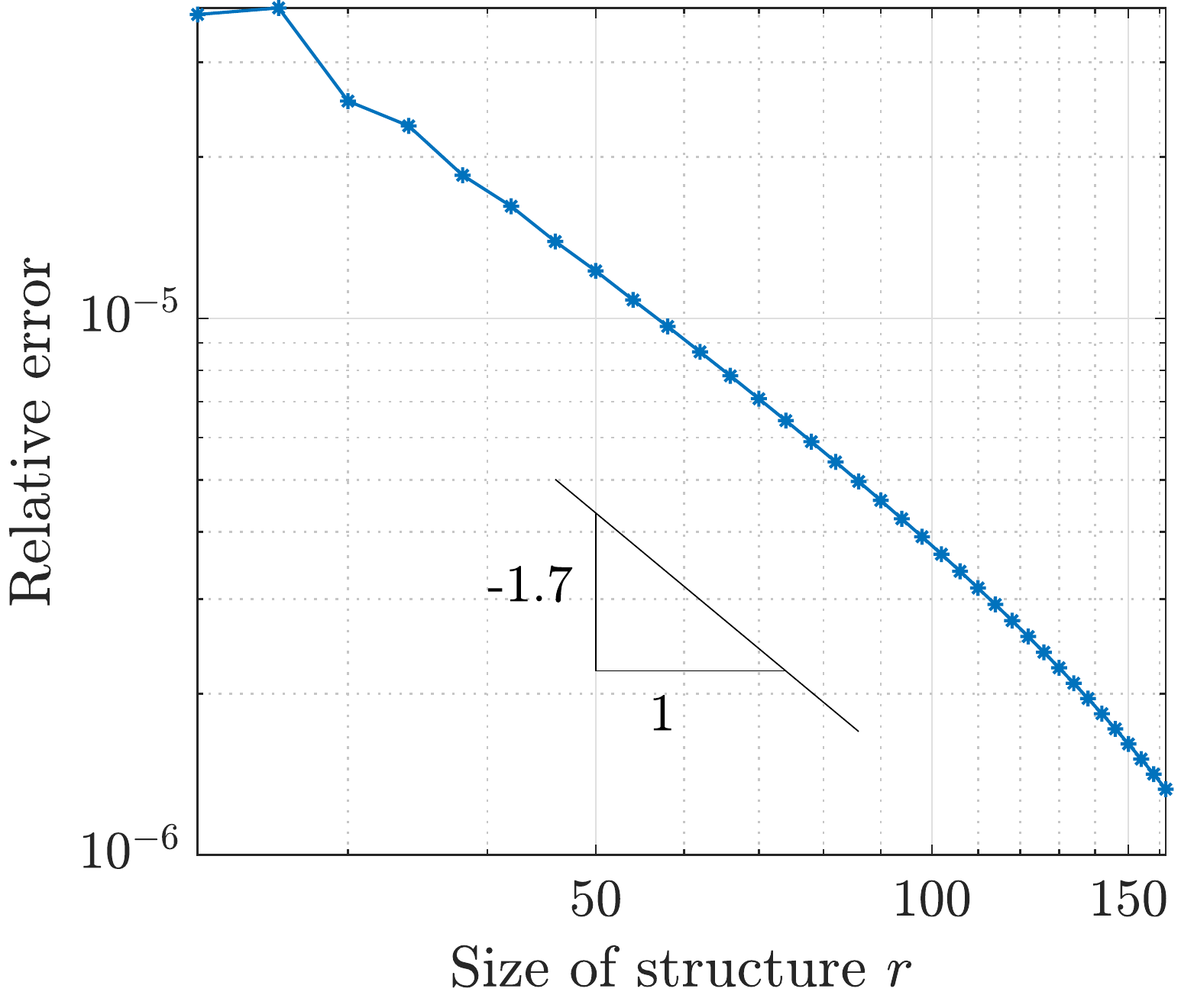}
	\includegraphics[width=0.33\linewidth]{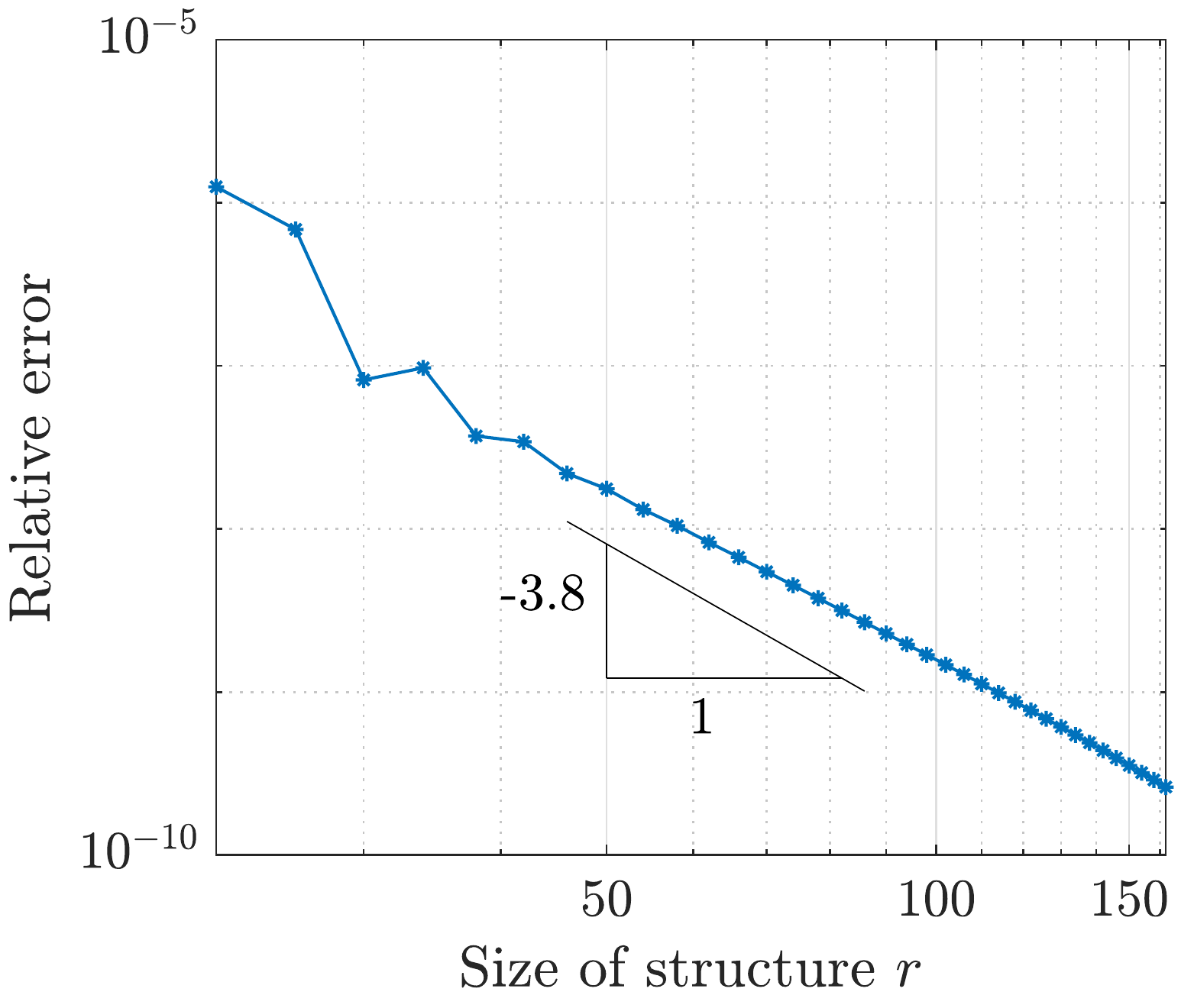}
	\begin{tikzpicture}[scale=0.45]
	\node[scale=1] at (-0.15,0) {$\cdots$};
		\shade[ball color = gray!10] (0.6,0) circle (0.2);
		\draw (0.6,0) circle (0.2);
		\shade[ball color = gray!10] (1.1,0) circle (0.2);
		\draw (1.1,0) circle (0.2);
		\shade[ball color = gray!10] (2,0) circle (0.2);
		\draw (2,0) circle (0.2);
		\shade[ball color = gray!10] (2.5,0) circle (0.2);
		\draw (2.5,0) circle (0.2);
		\shade[ball color = gray!10] (3.4,0) circle (0.2);
		\draw (3.4,0) circle (0.2);
		\shade[ball color = gray!10] (5.3,0) circle (0.2);
		\draw (5.3,0) circle (0.2);
		\shade[ball color = gray!10] (6.2,0) circle (0.2);
		\draw (6.2,0) circle (0.2);
		\shade[ball color = gray!10] (6.7,0) circle (0.2);
		\draw (6.7,0) circle (0.2);
		\shade[ball color = gray!10] (7.6,0) circle (0.2);
		\draw (7.6,0) circle (0.2);
		\shade[ball color = gray!10] (8.1,0) circle (0.2);
		\draw (8.1,0) circle (0.2);
		
		\node at (9,0) {$\cdots$};
		\node[white] at (0,-4) {.};
	\end{tikzpicture}
	\caption{Convergence of the frequency of the defect modes in a lattice with resonators arranged in pairs ($N=2$) and a defect corresponding to the two central resonators being removed. This gives two topologically protected edge modes. Here, the error scales as $O(r^{-1.7})$ for the even mode and $O(r^{-3.8})$ for the odd mode, where $r$ is the length of the truncated structure.} \label{fig:SSHdislocated}
\end{figure}

\begin{rmk}
Comparing \Cref{fig:capconvergence} and \Cref{fig:point_defect_convergence}, it appears that the error of the frequency of the defect mode is inheriting the convergence rate of the capacitance coefficients. While this is unsurprising, it turns out not to be the case for other types of defect. For example, in \Cref{fig:SSHdislocated} we show the convergence of the defect modes in a dislocated Su-Schrieffer-Heeger (SSH) lattice, which is a one-dimensional lattice of resonators arranged in pairs (so $N=2$). This system supports two defect modes that are known to be \emph{topologically protected} and benefit from enhanced robustness properties (see \cite{ammari2022robust} for details). The even mode experiences $O(r^{-1.7})$ convergence while the odd mode converges at a faster $O(r^{-3.8})$ rate. Understanding these different convergence rates is a valuable question for future study.
\end{rmk}

\section{Convergence to continuous spectrum} \label{sec:cont}
Through numerical illustrations, we can illustrate how the discrete spectrum of the truncated structure approximates the Floquet-Bloch spectral bands of the infinite structure. Analytic statements relating these two quantities are made in \cite{ammari2023spectral}. Here, we will briefly demonstrate that the generalized capacitance matrix can be used to relate the spectra of infinite and finite arrays of resonators. This is a non-trivial issue, since the two spectra have very different fundamental characteristics and there are complex edge effects occurring at the ends of the finite structure that need to be accounted for.

The basis of our comparison is a method that approximates the band structure of the periodic structure, given the set of eigenpairs $(\omega_j,u_j)$ of a truncated structure. If we take the size $r$ of the truncated structure to be reasonably large, then the eigenmode $u_j$ will \emph{approximately} be a linear combination of Bloch modes with frequency $\omega_j$. To compare the discrete eigenvalues of the truncated problem to the continuous spectrum of the periodic problem, we `reverse engineer' the appropriate quasi-periodicities $\alpha$ corresponding to these Bloch modes. Observe that an eigenvector $u_j$ is a vector of length $N|I_r|$. If we let $(u_j)_m$ denote the vector of length $N$ associated to cell $m\in \Lambda$, then we can define the \emph{truncated Floquet transform} of $u_j$ as
\begin{equation} \label{def:truncFloquet}
(\hat{u}_j)_\alpha = \sum_{m\in I_r} (u_j)_me^{\iu \alpha\cdot m}, \qquad \alpha \in  Y^*.
\end{equation}
Observe that $(\hat{u}_j)_\alpha$ is a vector of length $N$. Looking at the 2-norm $\|(\hat{u}_j)_\alpha\|_2$ as a function of $\alpha$, this function has distinct peaks at certain values of $\alpha$. We then take the quasi-periodicitiy associated to the mode $u_j$ as 
\begin{equation}
\argmax_{\alpha \in Y^*} \|(\hat{u}_j)_\alpha\|_2.
\end{equation}
Note that the symmetry of the problem means that if $\alpha$ is an approximate quasi-periodicity then so will $-\alpha$ be. In cases of additional symmetries of the lattice, we expect additional symmetries of the quasi-periodicities.

\begin{figure}
	\centering
	\includegraphics[width=0.55\linewidth]{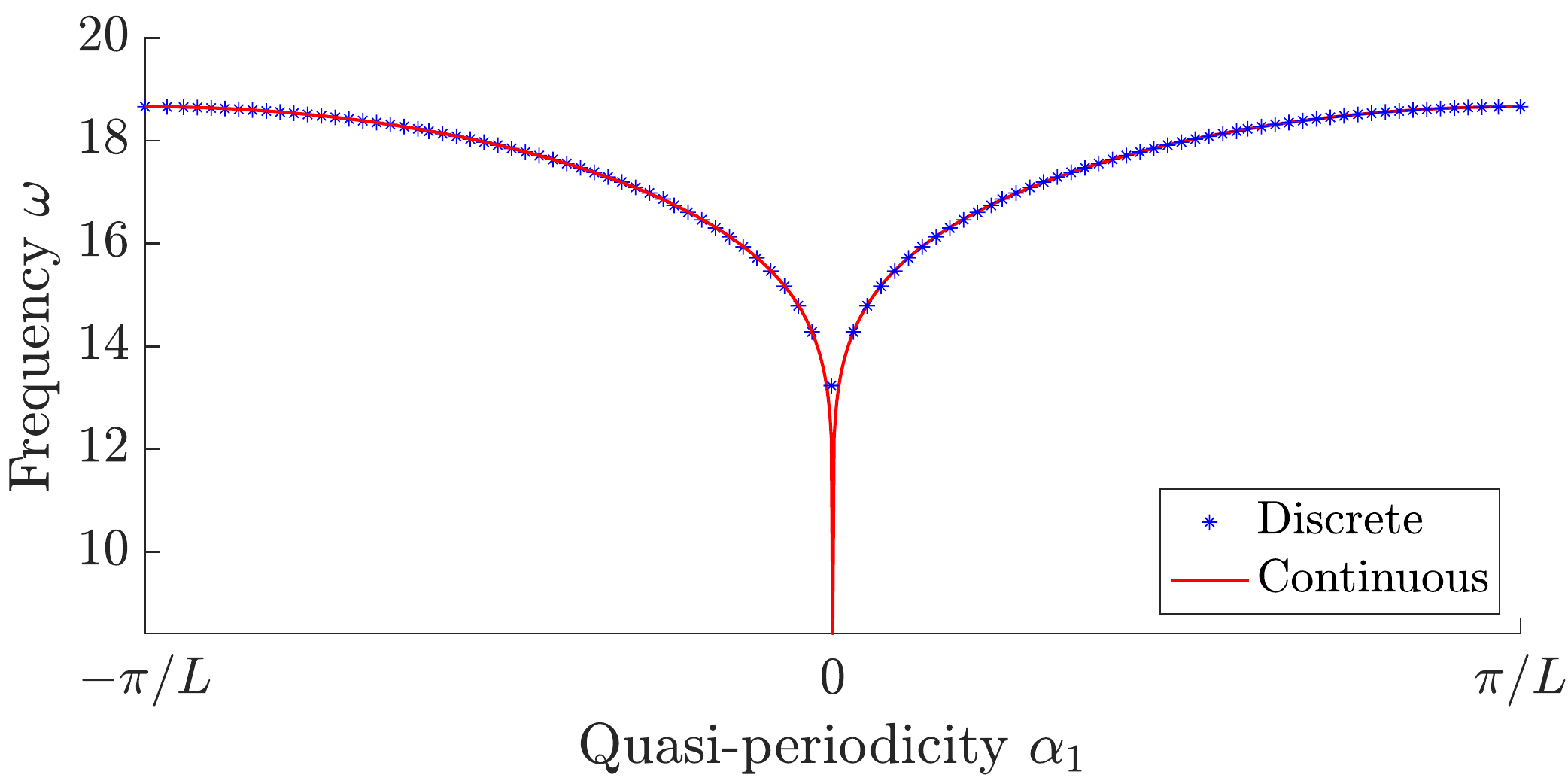}
	\begin{tikzpicture}
	\node at (0,0) {$\cdots$};
	\shade[ball color = gray!10] (0.5,0) circle (0.2);
	\draw (0.5,0) circle (0.2);
	\shade[ball color = gray!10] (1.2,0) circle (0.2);
	\draw (1.2,0) circle (0.2);
	\shade[ball color = gray!10] (1.9,0) circle (0.2);
	\draw (1.9,0) circle (0.2);
	\shade[ball color = gray!10] (2.6,0) circle (0.2);
	\draw (2.6,0) circle (0.2);
	\shade[ball color = gray!10] (3.3,0) circle (0.2);
	\draw (3.3,0) circle (0.2);
	\shade[ball color = gray!10] (4,0) circle (0.2);
	\draw (4,0) circle (0.2);
	\shade[ball color = gray!10] (4.7,0) circle (0.2);
	\draw (4.7,0) circle (0.2);
	
	\node at (5.25,0) {$\cdots$};
	\node[white] at (0,-2.5) {.};
	\end{tikzpicture}
	\caption{The continuous spectrum of the infinite structure and the discrete spectrum of the truncated structure for a one-dimensional lattice with a single repeated resonator ($N=1$). The truncated structure has 50 resonators. The truncated Floquet transform \eqref{def:truncFloquet} is used to approximate the quasi-periodicity of the truncated modes.} \label{fig:band}
\end{figure}

\Cref{fig:band} shows the subwavelength continuous spectrum of an infinite array of resonators, which takes the form of a single spectral band. It is plotted alongside the discrete spectrum of a truncated array of 50 resonators, for which the quasi-periodicities have been approximated using the method outlined above. The discrete band structure mostly follows closely the infinite one, even for this relatively small truncated array. The frequencies close to zero are not exhibited in the finite structure, as the edge effects have the greatest effect on low-frequency modes. We would need to consider a much larger truncated structure to capture the lowest frequency part of the spectrum. This behaviour can also be observed in more complicated structures, such as arrays of more than one repeating resonator ($N>1$) or multi-dimensional lattices, see \cite{ammari2023spectral} for details. 

\section{Concluding remarks}

In this work, we have demonstrated the convergence of defect modes in large resonator arrays to the corresponding modes in the infinite, periodic structure. We have studied this using the generalized capacitance matrix, which is a canonical model for three-dimensional wave scattering by resonant systems with long-range interactions. Our conclusions could also be generalized to other models, since the decay of the Helmholtz Green's function is the key feature that underpins our results. 

Our results clarify the exponential convergence of defect modes that was observed in previous studies \cite{lin2013resonances, lin2016perturbation, lu2022defect, lin2015twodim}. We observed that the exponential convergence occurs only when the dimension of the periodic lattice is equal to that of the differential problem. When the lattice has fewer dimensions than the space it is embedded in, the convergence is algebraic. This is due to the fact that waves are able to propagate away from the structure in the ``spare'' directions, leading to long-range interactions.

A significant advantage of the model used in this work is that the Bloch modes, in addition to the defect modes, are also concisely characterized. As detailed in \Cref{sec:cont}, this provides a numerical method for approximating the continuous spectrum. Importantly, this constructive approach presents a possible avenue for proving statements about the convergence of eigenvalues to the continuous spectrum, as we take advantage of in \cite{ammari2023spectral}.

%

\appendix

\section{Asymptotic derivation of the generalized capacitance matrix} \label{sec:asymptotics}

In this brief appendix, we recall how the generalized capacitance matrix arises through an asymptotic treatment of a system of coupled high-contrast resonators. In particular, it can be used to characterize the subwavelength (\emph{i.e.} asymptotically low-frequency) resonance of the system. For more details and a review of extensions to other settings (such as non-Hermitian and time-modulated systems) see \cite{ammari2021functional}. 

We will present the results for a finite system of resonators. Analogous results hold for infinite periodic systems, by modifying the Green's function appropriately \cite{ammari2021functional}. It is also possible to understand defect modes in high-contrast systems using classical two-scale homogenization \cite{cherdantsev2009spectral, kamotski2018localized}. We suppose that the material inclusions $D_i\subset\mathbb{R}^3$, as considered already in this work, represent the material inclusions that will act as our resonators. We consider the scattering of time-harmonic waves with frequency $\omega$ and will solve a Helmholtz scattering problem in three dimensions. This Helmholtz problem, which can be used to model acoustic, elastic and polarized electromagnetic waves, represents the simplest model for wave propagation that still exhibits the rich phenomena associated to subwavelength physics.

We use $v_i$  denote the wave speed in each resonator $D_i$. In which case, $k_i=\omega/v_i$ is the wave number in $D_i$. Similarly, the wave speed and wave number in the background medium are denoted by $v$ and $k$. Finally, we must introduce the material contrast parameters $\delta_1,\dots,\delta_N$. These parameters describe the contrast between the material inside $D_i$ and the background material. For example, in the case of an acoustic system, $\delta_i$ is the density of the material inside $D_i$ divided by the density of the background material. We will want these contrast parameters to be small (an air bubble in water is one famous example in the setting of acoustics). Then for the domain 
$$D=\bigcup_{m\in I_r} \bigcup_{i=1}^N (D_i+m),$$
we consider the Helmholtz resonance problem
\begin{equation} \label{eq:finite_scattering}
	\left\{
	\begin{array} {ll}
		\ds \Delta {u}+ k^2 {u}  = 0 & \text{in } \R^3 \setminus \overline{D}, \\
		\nm
		\ds \Delta {u}+ k_i^2 {u}  = 0 & \text{in } D_i+m, \text{ for } i=1,\dots,N, \ m\in I_r, \\
		\nm
		\ds  {u}|_{+} -{u}|_{-}  =0  & \text{on } \partial D, \\
		\nm
		\ds  \delta_i \frac{\partial {u}}{\partial \nu} \bigg|_{+} - \frac{\partial {u}}{\partial \nu} \bigg|_{-} =0 & \text{on } \partial D_i+m \text{ for } i=1,\dots,N, \ m\in I_r, \\
		\nm
		\multicolumn{2}{l}{\ds u(x) \ \text{satisfies the Sommerfeld radiation condition},}
	\end{array}
	\right.
\end{equation}
where the Sommerfeld radiation condition says that
\begin{equation} \label{eq:SRC}
	\lim_{|x|\to\infty} |x|\left(\ddp{}{|x|}-\iu k\right)u=0, \quad \text{uniformly in all directions } x/|x|,
\end{equation}
and guarantees that energy is radiated outwards by the scattered solution. 

The asymptotic regime we consider is that the material contrast parameters are all small while the wave speeds are all of order one. That is, there exists some $\delta>0$ such that 
\begin{equation}
	\delta_i=O(\delta) \quad\text{and}\quad v,v_i=O(1) \quad \text{as} \quad \delta\to0, \text{ for } i=1,\dots,N.
\end{equation}
Within this setting, we are interested in solutions to the resonance problem \eqref{eq:finite_scattering} that are \emph{subwavelength} in the sense that
\begin{equation}
	\omega\to0 \quad \text{as}\quad \delta\to0.
\end{equation}

To be able to characterize the subwavelength resonant modes of this system, we must define the \emph{generalized} capacitance coefficients. Recall the capacitance coefficients $(C^{mn}_\frm)_{ij}$ from \eqref{eq:Cfinite}. Then, we define the corresponding generalized capacitance coefficient as
\begin{equation} \label{eq:gcm}
(\C_\frm^{mn})_{ij}=\frac{\delta_i v_i^2}{|D_i^m|} (C^{mn}_\frm)_{ij},
\end{equation}
where $|D_i^m|$ is the volume of the bounded subset $D_i^m$. Then, the eigenvalues of $\C_\frm$ determine the subwavelength resonant frequencies of the system, as prescribed by the following theorem.

\begin{thm}\label{thm:asymp}
	Consider a system of $N|I_r|$ subwavelength resonators in $\mathbb{R}^3$. For sufficiently small $\delta>0$, there exist $N|I_r|$ subwavelength resonant frequencies $\omega_1(\delta),\dots,\omega_{N|I_r|}(\delta)$ with non-negative real parts. Further, the subwavelength resonant frequencies are given by
	$$ \omega_n = \sqrt{\lambda_n}+O(\delta) \quad\text{as}\quad \delta\to0,$$
	where $\{\lambda_n: n=1,\dots,N|I_r|\}$ are the eigenvalues of the generalized capacitance matrix $\mathcal{C}_\frm$, which satisfy $\lambda_n=O(\delta)$ as $\delta\to0$.
\end{thm}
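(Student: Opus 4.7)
The plan is to represent the solution of \eqref{eq:finite_scattering} via single-layer potentials, translate the transmission conditions into a boundary integral system, and carry out a joint asymptotic expansion in the small parameters $\omega$ and $\delta$ in the distinguished regime $\omega^2 = O(\delta)$. First, I would seek $u$ in the ansatz form
$$u(x) = \begin{cases} \mathcal{S}_{D_i+m}^{k_i}[\phi_i^m](x), & x\in D_i+m,\\ \mathcal{S}_D^{k}[\psi](x), & x\in\mathbb{R}^3\setminus\overline{D},\end{cases}$$
which automatically satisfies the Helmholtz equations inside and outside the resonators and the Sommerfeld condition. Imposing continuity of $u$ and the weighted jump condition on the normal derivative across each $\p D_i^m$, via the standard jump relations $\p_\nu \mathcal{S}^k_D|_\pm = (\pm\tfrac12 I + \mathcal{K}_D^{k,*})$, yields a $2N|I_r|$-dimensional block boundary integral system of the form $\mathcal{A}(\omega,\delta)\,\Psi = 0$ with $\Psi = (\psi,\phi_i^m)$. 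Subwavelength resonances are precisely the characteristic values of $\mathcal{A}(\omega,\delta)$ bifurcating from $\omega=0$.

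Next, I would expand the operator in powers of $\omega$ using $G^k(x) = G(x) + \tfrac{\iu k}{4\pi}+O(k^2)$, to write
$$\mathcal{A}(\omega,\delta)=\mathcal{A}_{0,0}+\omega\mathcal{A}_{1,0}+\omega^2\mathcal{A}_{2,0}+\delta\mathcal{A}_{0,1}+O\bigl((\omega+\delta)^2\bigr).$$
The limit operator $\mathcal{A}_{0,0}$ is Fredholm of index zero and has an $N|I_r|$-dimensional kernel, spanned by pairs corresponding to the densities $\varphi_i^m := \mathcal{S}_D^{-1}[\chi_{\p D_i^m}]$, which produce unit constant potential on $\p D_i^m$ and zero on the other components. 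Applying the generalized Rouch\'e theorem of Gohberg--Sigal, I would then conclude that exactly $N|I_r|$ characteristic values bifurcate from $\omega=0$ as $\delta\to0$, with non-negative real parts.

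To identify the leading term, I would project the perturbed system onto the kernel of $\mathcal{A}_{0,0}$. The key computation is to integrate the interior Helmholtz equation over each $D_i^m$, which by the divergence theorem yields $\int_{\p D_i^m}\p_\nu u|_-\dx\sigma = -k_i^2|D_i^m|\,u|_{D_i^m} + O(\omega^4)$. Using the jump condition $\delta_i\p_\nu u|_+ = \p_\nu u|_-$ to transfer this integral to the exterior trace, and pairing with the dual basis $\{\mathcal{S}_D^{-1}[\chi_{\p D_j^n}]\}$, produces exactly the plain capacitance entries $(C_\frm^{mn})_{ij}$. Absorbing the prefactors $\delta_i v_i^2/|D_i^m|$ that arise from this volumetric integration and the transmission condition into the matrix entries, one obtains the reduced eigenvalue problem $\mathcal{C}_\frm v = \omega^2 v$ at leading order, which establishes $\omega_n=\sqrt{\lambda_n}+O(\delta)$. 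Since each entry of $\mathcal{C}_\frm$ carries an overall factor of $\delta_i = O(\delta)$, the eigenvalues $\lambda_n$ are themselves $O(\delta)$.

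The main obstacle is the careful bookkeeping of the joint $(\omega,\delta)$-expansion, together with the verification that the hypotheses of the Gohberg--Sigal theorem apply uniformly in $\delta$; in particular, one must show that $\mathcal{A}_{0,0}$ is Fredholm of index zero with precisely the claimed $N|I_r|$-dimensional kernel, which rests on the invertibility of the Laplace single-layer potential $\mathcal{S}_D$ on $H^{-1/2}(\p D)$, a three-dimensional feature that would fail in two dimensions due to logarithmic growth of the Green's function. A secondary subtlety is ensuring that the off-diagonal coupling entries $(C_\frm^{mn})_{ij}$ with $m\neq n$ emerge correctly from the exterior layer potential; this uses the explicit form of $\mathcal{S}_{\D_\frm}^{-1}$ acting on the block of characteristic functions, mirroring the computation performed in the proof of \Cref{thm:approx}.
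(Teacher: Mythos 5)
Your proposal is correct and follows essentially the same route as the argument the paper relies on: \Cref{thm:asymp} is stated without proof and refers to \cite{ammari2021functional}, where precisely this single-layer representation, the Gohberg--Sigal counting of characteristic values bifurcating from $\omega=0$, and the projection onto the kernel of the limiting operator (yielding $\mathcal{C}_\frm v=\omega^2 v$) are carried out. Two minor slips, neither structural: with the paper's convention $G(x)=-1/(4\pi|x|)$ the outgoing kernel expands as $G^k(x)=G(x)-\tfrac{\iu k}{4\pi}+O(k^2)$ (opposite sign to what you wrote), and the Gohberg--Sigal step in fact produces $2N|I_r|$ characteristic values near the origin, from which one retains the $N|I_r|$ with non-negative real part.
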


A similar result exists for an infinite periodic structure, in terms of the eigenvalues of the \emph{generalized} quasi-periodic capacitance matrix, as defined in \eqref{eq:Calpha}, see \cite{ammari2021functional} for details.

The definition \eqref{eq:gcm} clarifies the motivation for pre-multiplying by the perturbation matrix $\Bf$ to describe defects as in \eqref{eq:eig}. If we perturb the wave speed $v_i$ and contrast parameter $\delta_i$ such that 
$$\delta_iv_i^2 \to \delta_iv_i^2 b_i$$
for some coefficients $b_i$, the generalized capacitance coefficients are altered by multiplication by $b_i$. Similarly, the capacitance matrix $\Cf$ will be altered by pre-multiplication by $\Bf$. Observe that $b_i=1$ corresponds to  the unperturbed case, and the case of small perturbations is described by $b_i = 1+\eta_i$ for $|\eta_i| \ll 1$. When $\Bf$ is a compact perturbation of the identity, it describes defects that correspond to changing the material parameters on a finite number of resonators, so that the quantity $\delta_i v_i^2$ corresponding to those resonators is altered.

\section{Uniformity across the Brillouin zone}\label{sec:alp}
In this appendix, we provide additional details of the proof of \Cref{thm:approx}. The main result is \Cref{lem:bound}, which shows that $(\S_D^\alpha)^{-1}$ is in operator norm, uniformly bounded for $\alpha$ in a neighbourhood of $0$. The analysis is similar to \cite[Section 3.3]{ammari2020exceptional}.

From \emph{e.g.} \cite{ammari2009layer}, we have a dual-space representation of $G^\alpha$ given by
$$G^{\alpha}(x)= -\frac{1}{|Y|}\sum_{q\in \Lambda^*} \frac{e^{\iu(\alpha+q)\cdot x}}{ |\alpha+q|^2} =  \frac{-e^{\iu\alpha\cdot x}}{|Y| |\alpha|^2} -\frac{1}{|Y|}\sum_{q\in \Lambda^*\setminus\{0\}} \frac{e^{\iu(\alpha+q)\cdot x}}{ |\alpha+q|^2}.$$
Define the periodic Green's function $G^0$ as 
$$G^{0}(x)=-\frac{1}{|Y|}\sum_{q\in \Lambda^*\setminus\{0\}} \frac{e^{\iu q\cdot x}}{ |q|^2}.$$
For $\alpha$ close to zero, we then have 
$$G^{\alpha}(x)= \frac{-1}{ |Y||\alpha|^2} - \frac{\iu\alpha\cdot x}{ |Y||\alpha|^2} +\frac{(\alpha\cdot x)^2}{2|Y||\alpha|^2} + G^0(x) + O(|\alpha|).$$
Consequently, for $\alpha$ close to zero, we have the an expansion of the single-layer potential $\S_D^\alpha$:
\begin{multline}
	\S_D^\alpha[\psi](x) = -\frac{1}{|Y||\alpha|^2}\int_{\p D}\psi(y) \dx \sigma - \frac{\iu}{|Y||\alpha|^2}\int_{\p D}\alpha\cdot (x-y) \psi(y) \dx \sigma \\
	+ \frac{1}{2|Y||\alpha|^2}\int_{\p D}\bigl(\alpha\cdot (x-y)\bigr)^2 \psi(y) \dx \sigma + \S_D^0[\psi](x) + O(|\alpha|). \label{eq:exp}
\end{multline}
\begin{lem} \label{lem:phi=0}
	If $\S_D^{0}[\varphi] = K\chi_{\p D}$ for some constant $K$ and some $\varphi \in L^2(\p D)$ satisfying $\int_{\p D} \varphi \dx \sigma = 0$, then $\varphi = 0$.
\end{lem}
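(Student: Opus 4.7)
The plan is to show that $u := \S_D^0[\varphi]$ is identically constant on $\R^3$, after which $\varphi = 0$ follows from the standard jump relation for the normal derivative of a single-layer potential. First, I would combine the dual-space representation of $G^0$ recalled in \Cref{sec:alp} with Poisson summation to get $\Delta G^0 = \sum_{m\in \Lambda}\delta(\,\cdot\,-m) - 1/|Y|$ in the sense of distributions, so $G^0$, and hence $u$, is $\Lambda$-periodic. Because $u$ is a single-layer potential, it is continuous across $\p D$, and the hypothesis $\int_{\p D}\varphi\,\dx\sigma = 0$ yields
$$\Delta u(x) = -\frac{1}{|Y|}\int_{\p D}\varphi\,\dx\sigma = 0 \qquad \text{for } x \in \R^3\setminus(\Lambda + \p D),$$
so $u$ is a bounded harmonic function off $\p D$ with constant boundary value $K$ on $\p D$.

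Next, I would argue that $u \equiv K$. On each connected component of $D$, the maximum principle applied to $u-K$ gives $u \equiv K$ immediately. For the exterior, set $v := u - K$ and work on a fundamental cell $Y$: the function $v$ is harmonic on $Y\setminus \overline{D}$, vanishes on $\p D$, and is periodic across $\p Y$. Integration by parts yields
$$\int_{Y\setminus \overline{D}} |\nabla v|^2 \,\dx x = \int_{\p D} v \, \p_\nu v \,\dx\sigma - \int_{\p Y} v\, \p_\nu v \,\dx\sigma = 0,$$
where the $\p Y$-integral cancels by periodicity (opposite faces have opposite outward normals) and the $\p D$-integral vanishes because $v|_{\p D}=0$. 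Hence $v$ is locally constant on $Y\setminus \overline{D}$. Viewing this set in the torus $\R^3/\Lambda$, every connected component has non-empty boundary contained in $\p D$, where $v = 0$, so $v \equiv 0$ and therefore $u \equiv K$ on $\R^3$.

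Finally, the standard jump relation (see \cite{ammari2009layer}),
$$\left.\frac{\p u}{\p \nu}\right|_+ - \left.\frac{\p u}{\p \nu}\right|_- = \varphi \quad \text{on } \p D,$$
combined with the fact that $u$ is constant, forces $\varphi = 0$. The main obstacle lies in the exterior uniqueness step: one must check that the $\p Y$-contribution in the energy identity truly cancels under the periodic identification and that every connected component of $Y\setminus \overline{D}$ in the torus meets $\p D$, so that the locally-constant $v$ is pinned to $0$ on each component. Both facts are routine consequences of the assumption that $D$ is a disjoint union of finitely many bounded $C^{1,s}$ inclusions strictly contained in $Y$, but they are the only substantive points in the proof.
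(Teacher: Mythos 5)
Your proof is correct, and while it rests on the same basic mechanism as the paper's --- an integration by parts showing that the Dirichlet energy of the periodic potential vanishes on the exterior region --- it is organized differently and is more self-contained. The paper keeps the constant $K$ in play, uses the jump relations to derive $\int_{\p D}\varphi\,\dx\sigma = K\int_{Y\setminus D}|\nabla V|^2\dx x$, rules out $K\neq 0$ by contradiction, and then invokes an external result (\cite[Lemma~3.7]{ammari2020exceptional}) to pass from $\S_D^0[\varphi]=0$ together with $\int_{\p D}\varphi\,\dx\sigma=0$ to $\varphi=0$. You instead subtract $K$, prove directly that $u=\S_D^0[\varphi]$ is constant on all of $\R^3$ (maximum principle inside $D$, energy identity on the torus outside, with the $\p Y$ terms cancelling by periodicity), and then read off $\varphi = \p_\nu u|_+ - \p_\nu u|_- = 0$ from the jump relation. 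This effectively inlines the content of the cited lemma, never needs to establish $K=0$ separately, and avoids the paper's somewhat opaque step asserting that $K\neq 0$ would force $\int_{Y\setminus D}|\nabla V|^2\dx x\neq 0$. Your placement of the hypothesis $\int_{\p D}\varphi\,\dx\sigma=0$ is also exactly right: it is what makes $u$ harmonic off $\p D$, since $\Delta G^0 = \sum_{m\in\Lambda}\delta(\cdot-m)-1/|Y|$ rather than a pure lattice of Dirac masses. The two points you flag as the only delicate ones (cancellation of the boundary terms on $\p Y$ under the periodic identification, and the fact that every component of the exterior in the torus meets $\p D$ so that the locally constant $v$ is pinned to zero) are indeed the substantive checks, and both go through under the paper's standing assumptions on $D$.
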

\begin{proof}
	For $x\in \R^3\setminus \D$, define $V(x) := \S_D^{0}[\varphi](x)$. Then $V$ solves the following differential problem,
	\begin{equation} \label{eq:V}
		\left\{
		\begin{array} {ll}
			\ds \Delta V  = 0 & \text{in } \R^3\setminus \D, \\[0.3em]
			\nm
			\ds  V|_{+}  = K  & \text{on } \partial \D, \\[0.3em]
			\nm
			\ds  V(x+m)  = V(x)  & \text{for all } m \in \Lambda.
		\end{array}
		\right.
	\end{equation}
	Moreover, using the jump relations and integration by parts, we have that 
	$$\int_{\p D} \varphi \dx \sigma = K\int_{Y \setminus D} |\nabla V|^2 \dx x=0.$$
	If $K\neq 0$, it follows from \eqref{eq:V} that $\int_{Y \setminus D} |\nabla V|^2 \dx x\neq0$ which is a contradiction. In other words we must have $K=0$, so that $\S_D^0[\varphi] = 0$ and $\int_{\p D} \varphi \dx \sigma = 0$. From \cite[Lemma~3.7]{ammari2020exceptional}, we have that $\varphi = 0$.
\end{proof}

\begin{lem} \label{lem:bound}
	$\|(\S_D^\alpha)^{-1}\|$, in operator norm, is bounded for $\alpha$ in a neighbourhood of $0$.
\end{lem}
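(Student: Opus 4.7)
The plan is to treat $\S_D^\alpha$ as a finite-rank-plus-$O(|\alpha|)$ perturbation of the invertible operator $\S_D^0$ and invert it via a Sherman--Morrison--Woodbury-type reduction to a low-dimensional linear system whose nondegeneracy at leading order is supplied by \Cref{lem:phi=0}. From the expansion \eqref{eq:exp} I would write
\[
\S_D^\alpha = \S_D^0 + \mathcal{P}_\alpha + R_\alpha,
\]
where $\mathcal{P}_\alpha$ collects the two genuinely singular terms (with prefactors $|\alpha|^{-2}$ and $|\alpha|^{-1}$) and $\|R_\alpha\|_{L^2\to H^1}=O(|\alpha|)$; the third, bounded term in \eqref{eq:exp} has no singular prefactor and is absorbed into $R_\alpha$. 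A direct inspection gives that $\mathcal{P}_\alpha$ has rank at most two, with image contained in $\mathrm{span}\{\chi_{\p D},(\hat\alpha\cdot x)\chi_{\p D}\}$, where $\hat\alpha:=\alpha/|\alpha|$. Since $\S_D^0:L^2(\p D)\to H^1(\p D)$ is an isomorphism \cite{ammari2009layer}, it suffices to show that $I+F_\alpha+S_\alpha$ has a uniformly bounded inverse on $L^2(\p D)$, where $F_\alpha:=(\S_D^0)^{-1}\mathcal{P}_\alpha$ is of rank $\leq 2$ and $S_\alpha:=(\S_D^0)^{-1}R_\alpha$ has operator norm $O(|\alpha|)$.

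Setting $\phi_e:=(\S_D^0)^{-1}[\chi_{\p D}]$ and $\phi_e^{(1)}(\hat\alpha):=(\S_D^0)^{-1}[(\hat\alpha\cdot x)\chi_{\p D}]$, the image of $F_\alpha$ lies in $\mathrm{span}\{\phi_e,\phi_e^{(1)}(\hat\alpha)\}$. Solving $(I+F_\alpha)\psi=\varphi$ via the ansatz $\psi=\varphi-c_1\phi_e-c_2\phi_e^{(1)}(\hat\alpha)$ and then testing against the two functionals $\langle 1,\cdot\rangle$ and $\langle\hat\alpha\cdot y,\cdot\rangle$ reduces the problem to a $2\times 2$ linear system $M(\alpha)\binom{c_1}{c_2}=\binom{\langle 1,\varphi\rangle}{\langle\hat\alpha\cdot y,\varphi\rangle}$, whose entries are explicit linear combinations of $1$, $|\alpha|$, and the four pairings $\langle 1,\phi_e\rangle$, $\langle 1,\phi_e^{(1)}(\hat\alpha)\rangle$, $\langle\hat\alpha\cdot y,\phi_e\rangle$, and $\langle\hat\alpha\cdot y,\phi_e^{(1)}(\hat\alpha)\rangle$ (the last two coincide by self-adjointness of $\S_D^0$).

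The crux, and the main obstacle, is to show that $M(\alpha)^{-1}$ is uniformly bounded as $|\alpha|\to 0$. This is where \Cref{lem:phi=0} enters decisively: applied to $\phi_e$, the contrapositive forces $\langle 1,\phi_e\rangle=\int_{\p D}\phi_e\neq 0$, since $\S_D^0[\phi_e]=\chi_{\p D}\neq 0$ rules out the alternative $\phi_e=0$. This nonvanishing is what prevents the leading block of $M(\alpha)$ from being singular. Combined with continuous dependence of the pairings on $\hat\alpha$ and compactness of the unit sphere, one obtains $|\det M(\alpha)|\geq c>0$ uniformly for small $\alpha$; Sherman--Morrison--Woodbury then yields a uniform bound on $(I+F_\alpha)^{-1}$, and the $O(|\alpha|)$ tail $S_\alpha$ is absorbed by a Neumann series in $(I+F_\alpha)^{-1}S_\alpha$. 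The difficulty in this last step is that the raw entries of $M(\alpha)$ inherit the $|\alpha|^{-2}$ and $|\alpha|^{-1}$ scalings of $\mathcal{P}_\alpha$, so one must first rescale rows and columns by the appropriate powers of $|\alpha|$ to expose a genuinely bounded leading-order matrix before invoking \Cref{lem:phi=0}, and one must track uniformity in the direction $\hat\alpha$ throughout.
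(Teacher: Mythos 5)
Your strategy---peel off the singular part of $\S_D^\alpha$ as a finite-rank operator and invert via a Woodbury-type reduction---is structurally different from the paper's, which argues by contradiction: it posits $\phi=\phi_0+|\alpha|\phi_1$ with $\S_D^\alpha[\phi]=O(|\alpha|)$, matches powers of $|\alpha|$ in \eqref{eq:exp} to force $\int_{\p D}\phi_0\,\mathrm{d}\sigma=0$ and $\S_D^0[\phi_0]=\mathrm{const}$, and then applies \Cref{lem:phi=0} to conclude $\phi_0=0$. As written, your version has three genuine problems. First, the decomposition fails at the outset: the third term in \eqref{eq:exp} equals $\frac{1}{2|Y|}\int_{\p D}(\hat\alpha\cdot(x-y))^2\psi\,\mathrm{d}\sigma$, which is $O(1)$ but not $O(|\alpha|)$, so it cannot be absorbed into $R_\alpha$; either it must join the finite-rank part (raising the rank to $3$, with image in $\mathrm{span}\{\chi_{\p D},(\hat\alpha\cdot x)\chi_{\p D},(\hat\alpha\cdot x)^2\chi_{\p D}\}$ and a leading-order block that genuinely depends on the direction $\hat\alpha$), or your final Neumann-series step collapses because $S_\alpha$ is not small. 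Second, the construction presupposes that $\S_D^0:L^2(\p D)\to H^1(\p D)$ is an isomorphism. That is not what \cite{ammari2009layer} provides (it covers the free-space operator and the quasi-periodic one for $\alpha\neq 0$), and the paper never claims it: \Cref{lem:phi=0} is deliberately a \emph{conditional} injectivity statement, and the contradiction argument is built so that $(\S_D^0)^{-1}$ is never needed. Without it, $\phi_e$, $F_\alpha$ and $S_\alpha$ are undefined.

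Third, and most seriously, the nondegeneracy of the reduced system is asserted rather than proved. Showing $\int_{\p D}\phi_e\,\mathrm{d}\sigma\neq 0$ controls one entry of $M(\alpha)$, but a $2\times 2$ (let alone $3\times 3$) matrix can be singular while having a nonzero entry. After the row and column rescalings you mention, the leading-order determinant is a specific combination of all the pairings together with contributions from the quadratic term, and proving it is bounded away from zero uniformly in $\hat\alpha$ is precisely the content of the lemma. In the paper this is where the real work happens: the chain of three order-matched identities, the cancellation that leaves $\S_D^0[\phi_0](x)=\tilde K(\v)+\frac{2(\v\cdot x)}{|Y|}\int_{\p D}(\v\cdot y)\phi_0\,\mathrm{d}\sigma$, and the observation that a function simultaneously independent of $\v$ and affine in $\v\cdot x$ must be constant. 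Your proposal defers exactly this step to ``rescale and track uniformity,'' which is the part that actually requires an argument.
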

\begin{proof}
	To reach a contradiction, we assume that $\S^\alpha_D[\phi] = O(|\alpha|)$ for some $\phi$, which can be written as $\phi = \phi_0 + |\alpha|\phi_1$, where $\phi_0$ is nonzero, does not depend on $\alpha$, and $\phi_1 = O(1)$ as $|\alpha|\to 0$. Also define $\v = \frac{\alpha}{|\alpha|}$. From \eqref{eq:exp} it follows that
	\begin{align*}
		\int_{\p D} \phi_0\dx \sigma &= 0, \\
		\int_{\p D}\phi_1(y) \dx \sigma +\iu\int_{\p D}\v\cdot (x-y) \phi_0(y) \dx \sigma &= O(|\alpha|), \\
		K(\v) - \iu\v\cdot x\int_{\p D} \phi_1(y) \dx \sigma + \frac{1}{2}\int_{\p D}\bigl(\v\cdot (x-y)\bigr)^2 \phi_0(y) \dx \sigma  + |Y|\S_D^0[\phi_0] &=O(|\alpha|),
	\end{align*}
	where $K$ is constant as function of $x$. Simplifying, we have that
	$$\frac{1}{2}\int_{\p D}\bigl(\v\cdot (x-y)\bigr)^2 \phi_0(y) \dx \sigma = -(\v\cdot x)\int_{\p D} (\v\cdot y) \phi_0(y)\dx\sigma  + \frac{1}{2}\int_{\p D} (\v\cdot y)^2\phi_0(y)\dx\sigma.$$
	In total we get
	$$\S_D^0[\phi_0](x) = \tilde K(\v) + \frac{2(\v\cdot x)}{|Y|}\int_{\p D} (\v\cdot y) \phi_0(y)\dx\sigma,$$
	where $\tilde K$ is constant in $x$. Observe that $\S_D^0[\phi_0](x)$ is independent of $\v$. As a function of $x$, this function is constant for $x\in \v^\perp$, and so this function is constant for all $x$. From \Cref{lem:phi=0} we get that $\phi_0 = 0$ which proves the claim.
\end{proof}

\section*{Acknowledgements}

The work of BD was supported by a fellowship funded by the Engineering and Physical Sciences Research Council under grant number EP/X027422/1. 

\bibliographystyle{abbrv}
\bibliography{fininf}{}
\end{document}